\documentclass{llncs}
\pdfoutput=1
\usepackage{amsmath}
\usepackage{amssymb}
\usepackage{url}
\usepackage{stmaryrd}
\usepackage{paralist}

\newtheorem{notation}{Notation}


\newcommand{\mathify}[1]{\ifmmode{#1}\else\mbox{$#1$}\fi}

\newcommand{\N}{\mathbb{N}}
\newcommand{\Z}{\mathbb{Z}}
\newcommand{\R}{\mathbb{R}}

\newcommand{\defeq}{\;\mathrel{\colon\!\!\!\colon\!\!\!\!=\;}}

\renewcommand{\le}{\leqslant}

\newcommand{\lt}{<}

\renewcommand{\ge}{\geqslant}

\newcommand{\nin}{\not\in}

\newcommand{\set}[1]{\mathify{\left\{ #1 \right\}}}
\newcommand{\zfset}[2]{\mathify{\left\{ #1 \;\left|\; #2  \right.\right\}}}

\newcommand{\ord}[1]{\left[{#1}\right]}

\newcommand{\wehave}{.\;}

\def\cX{{\cal X}}


\newcommand{\hset}[1]{\mathify{\left\{\!\left| {#1} \right|\!\right\}}}
\newcommand{\hempty}{\emptyset}
\newcommand{\hsum}{\oplus}
\newcommand{\hmult}{\otimes}
\newcommand{\hdiff}{\ominus}
\newcommand{\join}{\obar}
\newcommand{\reduce}[1]{\mathify{\mathcal{R}({#1})}}

\newcommand{\pstar}{P_{\mathord{\ast}}}
\newcommand{\qstar}{Q_{\mathord{\ast}}}
\newcommand{\rstar}{R_{\mathord{\ast}}}
\newcommand{\collect}{\mathcal{C}}
\newcommand{\ordcollect}[1]{\mathcal{C}_{\ord{#1}}}
\DeclareMathOperator{\supp}{supp}


\begin{document}
\title{Symbolic Domain Decomposition}

\author{%
  Jacques Carette\inst{1} \and Alan P. Sexton\inst{2} \and Volker Sorge\inst{2} \and Stephen M. Watt\inst{3}}
\institute{%
  $^1${Department of Computing and Software}, {McMaster University} \\
  {\url{www.cas.mcmaster.ca/~carette}}\\
  $^2${School of Computer Science},  {University of Birmingham} \\
  {\url{www.cs.bham.ac.uk/~aps|~vxs}}\\
  $^3${Department of Computer Science},
  {University of Western Ontario}\\
  {\url{www.csd.uwo.ca/~watt}}
}
\maketitle

\begin{abstract}
  Decomposing the domain of a function into parts has many uses in mathematics.
  A domain may naturally be a union of pieces, a function may be defined by
  cases, or different boundary conditions may hold on different regions. For any
  particular problem the domain can be given explicitly, but when dealing with a
  family of problems given in terms of symbolic parameters, matters become more
  difficult. This article shows how hybrid sets, that is multisets allowing
  negative multiplicity, may be used to express symbolic domain decompositions
  in an efficient, elegant and uniform way, simplifying both computation and
  reasoning. We apply this theory to the arithmetic of piecewise functions and
  symbolic matrices and show how certain operations may be reduced from
  exponential to linear complexity.
\end{abstract}

\section{Introduction}
\label{sec:intro}

The goal of this paper is to develop general methods to work with domains having
symbolically defined parts. The \textit{raison d'\^etre} of symbolic
mathematical computation is to compute and reason about general expressions,
rather than working with particular values valid only at specific points.
Matters are simplest when variables range over a domain of interest and all
expressions are valid over the entire domain. Sometimes it is useful to perform
simplifications or other operations that are valid over part, but not all, of
the domain. In this situation, software systems may or may not record the
excluded region. But we are not always so fortunate to have this one-region
situation. More generally, the domain of interest may be made up of several
pieces with expressions taking different forms on different parts. Moreover, the
demarcation of the parts may be defined symbolically. This paper explores how to
represent such expressions concisely in a uniform way that simplifies
computation and reasoning.

When we do arithmetic with piecewise functions defined on explicit partitions,
we can do a mutual refinement of domain partitions to obtain regions that may
each be handled uniformly. When the partitions are defined symbolically,
however, we obtain a massive explosion of cases --- for $N$ binary operations on
functions of $k$ pieces there are $k^N$ potential regions. We say ``potential''
regions because, of this large number, not all of the regions are in fact
feasible. Furthermore, it is usually not possible to determine which regions are
feasible and which are not. For example, the sum 
$\sum_{i=1}^N f_i(x)$ of the functions  
\begin{equation*}
f_i(x) = \begin{cases} 0 & \text{for~} x < k_i, \\ A_i & \text{for~} x \ge k_i.\end{cases}
\end{equation*}
has $\sum_{i=0}^N N!/i!$ possible orderings of the $k_i$, with each ordering having
between 2 and $N+1$ regions. There is no ordering in which all $2^N$ regions are
realized.

We take the view that it is generally preferable to have a single compact
closed-form expression rather than a collection of cases, even if it means
introducing some new operations. For example, we are perfectly satisfied using
the Heaviside step function and giving the sum of the $f_i$ as $\sum_{i=1}^N A_i
H(x-k_i)$.

In this paper we show how hybrid sets, a variation on multisets allowing negative
multiplicities, enable us to write elegant closed form expressions of the form
we desire.  This use of hybrid sets also allows us to define a generalised
notion of partition, where symbolically defined parts are combined in more
useful ways than the usual set operations.  Our approach unifies and 
generalises a number of other techniques, such as the use of oriented regions 
for domains of integration.

Introducing new operators or generalising existing ones to write single
closed form expressions is more than just a cosmetic re-arrangement. 
It allows one to perform arithmetic and simplifications on whole expressions, 
and to reason about the expression and about the regions themselves.   
It is already customary to do this for certain operators.   
For example, by defining $\int_a^b f \mathrm{d}x = - \int_b^a f\mathrm d x$, it
becomes possible to write identities that hold universally. Then we have that,
independent of the relative order of $a$, $b$ and $c$, and subject to $f$ being
defined on the requisite domains,
\begin{equation}\label{identity:int}
\int_a^b f \mathrm d x = \int_a^c f \mathrm d x + \int_c^b f\mathrm d x.
\end{equation}
With a little work, we can also generalise the integral formula to
integrating over oriented subsets of $\R^n$.
Some authors similarly adjust the definitions of other operators to obtain 
universally true statements.  Similarly, Karr~\cite{KarrJACM} defines the
summation operator $\sum_{m \le i < n}$ so that 
$\sum_{m \le i < n} = - \sum_{n \le i < m}$ when $n < m$.
This allows equations such as the
following to hold for any ordering of $\ell$, $m$, $n$: 
\begin{equation*}
\sum_{m \le i < n} \bigl  ( g(i+1) - g(i) \bigr ) = g(n) - g(m), ~~~
\sum_{\ell \le i < n} f(i) = \sum_{\ell \le i < m} f(i) + \sum_{m \le i < n} f(i).
\end{equation*}

This paper formalises and extends these ideas in several ways, giving a
generalised framework for domain partitions and piecewise defined functions. We
first introduce some preliminaries and hybrid sets (\S2) and then their
generalisations to our notions of generalised partitions and hybrid functions
(\S3). We then present how we can decompose domains of hybrid functions to allow
for the combination of their symbolically defined pieces (\S4), before
presenting some applications (\S5) and discussing some concrete examples (\S6).

\section{Preliminaries}
\label{sec:prelim}
\subsection{Partitions and piecewise functions}
\label{sec:partitions}

The domain of definition of various mathematical objects (functions, 
vectors, matrices, sequences, etc) may naturally be decomposed into a
(disjoint) union of pieces where our object is then defined 
uniformly.
When the pieces are given as an explicit union of provably disjoint
sets which form a partition of the domain, the interpretation of a given
expression is reasonably straightforward.  More formally,

\begin{notation}
  We use the notation $\collect_{i \in I} X_i$, or, more briefly, $\collect_{I}
  X_i$ to describe a collection of elements $X_1, X_2, \dots$, indexed by a set
  $I$.  For $n\in\N$, we denote by $\ord{n}$ the set $\set{1,\ldots,n}$.
\end{notation}

\begin{definition}
  A \emph{partition} of a set $U$ is a collection $\collect_{I} P_i$ of
  pairwise disjoint sets such that $\bigcup_{I} P_i = U$.
\end{definition}
A partition, $\collect_I P_i$, induces a total function
$\cX:U\rightarrow I$ which gives the index of $\pstar$ where each $u\in U$ sits.
Piecewise expressions are then defined on top of a partition.
\begin{definition}
  A \emph{piecewise expression} over a set $U$ is a collection $\collect_I
  (P_i,e_i)$ where $\collect_I P_i$ is a partition of $U$ and each $e_i$ is an
  expression.
\end{definition}
Typically, each $e_i$ contains a distinguished variable $y$ which is interpreted
to range over $U$. 
\begin{definition}
  We say that $f:U\rightarrow S$ is a \emph{piecewise-defined function} if we
  have a collection $\collect_I(P_i,f_i)$ where $\collect_I P_i$ is a partition
  of $U$, $\forall i \in I \wehave f_i:P_i\rightarrow S$, and
$$\forall x:U\wehave f(x) = f_{\cX(x)}(x).$$
We call $\collect_I (P_i,f_i)$ the \emph{definition} of $f$, and each $f_i$ a 
\emph{piece} of $f$.
\end{definition}
It is important to note that piecewise-defined functions use their
argument in two different ways: once \emph{geometrically} by choosing
a set $P_i$ ``over'' which to work, and once \emph{analytically} to 
evaluate a function $f_i$.  The definitions above are straightforward
generalisations of those found in~\cite{Carette:2007:ISSAC}.

Lastly, we have that arithmetic operations on piecewise-defined functions
operate component-wise.  Note that we will silently use the convention 
that operations defined on the codomain of a function are lifted pointwise
to apply to functions, in other words $(f+g)(x) = f(x) + g(x)$.

\begin{proposition}\label{prop:base-arith}
  Let $f,g : U\rightarrow S$ be two piecewise-defined functions on the same
  partition $\collect_I P_i$ of $U$, with $\collect_I f_i$ (respectively
  $\collect_I g_i$) the collections of pieces of $f$ (resp. $g$). Further, let
  $\star : S\times S\rightarrow S$. Then 
  $\collect_I \left(f_i \star g_i\right)$ is the collection
  of pieces of $f \star g$ over the partition $\collect_I P_i$.
\end{proposition}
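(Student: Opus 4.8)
The plan is to verify directly that the pointwise-lifted function $f \star g$ satisfies the three requirements in the definition of a piecewise-defined function, with the proposed collection $\collect_I (f_i \star g_i)$ as its pieces. The one observation that makes the statement true — and the place where the hypothesis is genuinely used — is that $f$ and $g$ are given over the \emph{same} partition $\collect_I P_i$, so they induce one and the same index function $\cX : U \rightarrow I$. This shared $\cX$ is exactly what allows us to evaluate both functions at any single point using pieces carrying a common index.

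First I would dispatch the two structural requirements, both of which are immediate. Since $\collect_I P_i$ is by assumption a partition of $U$, it is already a valid partition for the new function and needs no further argument. For each $i \in I$, the candidate piece $f_i \star g_i$ is well-typed: from $f_i, g_i : P_i \rightarrow S$ together with $\star : S \times S \rightarrow S$, the pointwise lifting yields $f_i \star g_i : P_i \rightarrow S$, as required. The substance of the proof is then the defining equation. Fixing an arbitrary $x \in U$ and writing $i = \cX(x)$ for its unique index, I would unwind in turn the pointwise lifting of $\star$ to functions on $U$, the piecewise definitions of $f$ and $g$ (which share $\cX$), and finally the pointwise lifting of $\star$ to the pieces on $P_i$, giving
\[
(f \star g)(x) = f(x) \star g(x) = f_{\cX(x)}(x) \star g_{\cX(x)}(x) = \bigl(f_{\cX(x)} \star g_{\cX(x)}\bigr)(x).
\]
This is precisely the condition that $f \star g$ is piecewise-defined with pieces $\collect_I(f_i \star g_i)$ over $\collect_I P_i$.

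I expect no real obstacle here; the argument is a short unwinding of definitions, and all the content sits in the shared-partition hypothesis. The only point deserving care is to make explicit \emph{why} the same chain of equalities would break down if $f$ and $g$ came with different partitions: a single $x$ could then lie in pieces of different indices for the two functions, so no common index $i$ — and hence no pointwise identification of the pieces — would be available. Keeping the shared index function $\cX$ visible throughout is what makes the derivation clean and pinpoints exactly where the ``same partition'' assumption is consumed.
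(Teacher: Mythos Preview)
Your proposal is correct and is exactly the natural verification one would expect; the paper itself omits the proof entirely (treating the proposition as immediate from the definitions), and your unwinding via the shared index function $\cX$ is precisely the argument that justifies it.
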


Note how the partition is entirely untouched.  This ``separation of 
concerns'' is what enables us to separate the issues of domain 
decompositions from arithmetic issues of piecewise-defined functions
(and expressions).  

To simplify our presentation, we introduce a domain restriction
operation and a join combinator on (partial) functions, to allow us a 
more syntactic method of ``building up'' piecewise functions.  These are
quite similar to Kahl's table composition 
combinators~\cite{Kahl-2003a}.

\begin{definition}
  \label{def:fn-restriction}
  The restriction $f^A$ of a function $f$ to a domain specified by a set $A$
  is
  \begin{equation*}
    f^A(x) \defeq 
    \begin{cases}
      f(x) & \text{if $x \in A$}\\
      \bot & \text{otherwise}
    \end{cases}
  \end{equation*}
\end{definition}

\begin{definition}
  The join, $f \join g$, of two (partial) functions $f$ and $g$, is defined as
  \begin{equation*}
    (f \join g)(x) \defeq
    \begin{cases}
      f(x) & \text{if f(x) is defined and g(x) is undefined}\\
      g(x) & \text{if g(x) is defined and f(x) is undefined}\\
      \bot & \text{otherwise}
    \end{cases}
  \end{equation*}
\end{definition}
This allows us to rewrite a piecewise-defined function $f$ defined by
$\collect_I (P_i,f_i)$, in terms of its pieces as $$f = f_1^{P_1} \join
f_2^{P_2} \join \ldots \join f_n^{P_n}.$$

But our goal is to work with piecewise-defined functions where we have a
\emph{symbolic} partition.  We need some new tools for this, which we will
develop in the next two sections.

\subsection{Hybrid Sets}
\label{sec:hybrid-sets}

We consider an extension of multisets, in which elements can occur multiple
times, to \emph{hybrid sets}, where the multiplicity of an element in a hybrid
set can range over all of $\Z$, instead of just $\N_0$. Thus a hybrid set, over
an underlying set $U$, is a mapping $U \rightarrow \Z$, i.e., it is an element
of $\Z^U$. We use the following definition, adapted
from~\cite{Loeb92-hybridsets}:

\begin{definition}
  Given a universe U, any function $H:U \rightarrow \Z$ is called a
  \emph{hybrid set}. 
\end{definition}

We can immediately define some useful vocabulary for working with hybrid
sets.
\begin{definition}
  The value of $H(x)$ is said to be the \emph{multiplicity} of the element $x$.
  If $H(x) \neq 0$, we say that $x$ is a member of $H$ and write $x \in H$;
  otherwise, we write $x \nin H$. The \emph{support} of a hybrid set is the
  (non-hybrid) subset $S$ of $U$ where $s\in S \iff s\in H$; we will denote the
  support of $H$ by $\supp H$. We (re)use $\hempty$ to (also) denote the empty
  hybrid set, i.e. the hybrid set for whom all elements have multiplicity $0$.
\end{definition}

\begin{notation}
  We use the notation $\hset{x_1^{m_1}, x_2^{m_2}, \dots}$ to describe the
  hybrid set containing elements $x_1$ with multiplicity $m_1$, $x_2$ with
  multiplicity $m_2$, etc. While our notation allows writing hybrid sets with
  multiple copies of the same element with different multiplicities, these
  denote the same hybrid set as that denoted by the normalised form with one
  copy of each element with a multiplicity which is the sum of the
  multiplicities of the copies of that element in the non-normalised form. Thus
  $\hset{a^2, b^1, a^{-3}, b^4} = \hset{a^{-1}, b^5}$.
\end{notation}

Set unions are usually defined by the boolean algebra structure (and more
specifically, via $\vee$) of the membership relation.  For hybrid set,
this is replaced by arithmetic over $\Z$.
\begin{definition}
  We define the sum, $A \hsum B$ of two hybrid sets $A$ and $B$ over a universe
  $U$, to be their pointwise sum. That is $(A \hsum B)(x) = A(x) + B(x)$ for all
  $x \in U$. We similarly define their difference, $A \hdiff B$ to be their
  pointwise difference, and their product, $A \hmult B$ to be their pointwise
  product. Let $\hdiff B$ denote $\hempty\hdiff B$.
\end{definition}
In other words, we do not use operations $A\cup B$, $A\cap B$ and
$A\setminus B$ for hybrid sets, but just $\hsum$, $\hdiff$ 
and $\hmult$.
We can
easily establish some identities such as
$(A \hsum B) \hdiff A = B$, $A\hdiff A = \hempty$ and 
$A \hsum (\hdiff B) = A \hdiff B$ as these follow directly from $\Z$.

Putting all of this together, we get:
\begin{proposition}
$\Z^U$ is a $\Z$-module. 
\end{proposition}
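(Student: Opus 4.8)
The plan is to verify the $\Z$-module axioms directly, using the fact that $\Z^U$ is the set of all functions $U \to \Z$, equipped with the pointwise operations $\hsum$ (addition) and $\hmult$ (which is \emph{not} the module scalar action). First I would clarify which operations play which role: the underlying abelian group is $(\Z^U, \hsum)$ with identity $\hempty$, and the scalar multiplication is the action of $\Z$ given by $(n \cdot H)(x) = n \cdot H(x)$ for $n \in \Z$, where the right-hand multiplication is ordinary integer multiplication in $\Z$. (Note this is the scalar action, not the pointwise product $\hmult$, which is a separate binary operation on hybrid sets.)

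The key steps proceed in two groups. First I would establish that $(\Z^U, \hsum)$ is an abelian group: commutativity and associativity of $\hsum$ follow pointwise from the corresponding properties of $+$ in $\Z$; the identity is $\hempty$, since $(H \hsum \hempty)(x) = H(x) + 0 = H(x)$; and the inverse of $H$ is $\hdiff H$, since $(H \hsum (\hdiff H))(x) = H(x) + (0 - H(x)) = 0$. Indeed the identities $A \hdiff A = \hempty$ and $A \hsum (\hdiff B) = A \hdiff B$ already noted in the excerpt cover exactly the inverse law. Second I would verify the four module compatibility axioms, for all $m, n \in \Z$ and $A, B \in \Z^U$: distributivity $n \cdot (A \hsum B) = (n \cdot A) \hsum (n \cdot B)$, distributivity $(m + n) \cdot A = (m \cdot A) \hsum (n \cdot A)$, compatibility $(mn) \cdot A = m \cdot (n \cdot A)$, and the unit law $1 \cdot A = A$. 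Each of these reduces, upon evaluating at an arbitrary $x \in U$, to a ring axiom of $\Z$ (distributivity, associativity of multiplication, and the multiplicative identity, respectively).

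The entire argument is thus a routine transfer of the commutative-ring structure of $\Z$ to the function space $\Z^U$ pointwise; this is the standard fact that a product of copies of a ring, indexed by any set, is a module over that ring. There is no genuine obstacle here. The only point requiring mild care is notational bookkeeping: making sure not to conflate the scalar action of $\Z$ with the pointwise product $\hmult$, and confirming that $\hdiff$ behaves as additive inversion, so that every axiom indeed collapses to an already-available property of $\Z$. Consequently the proof can be stated in a single line by appealing to the fact that each axiom holds pointwise in $\Z$.
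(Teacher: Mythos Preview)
Your proposal is correct and takes essentially the same approach as the paper: identify the abelian group as $(\Z^U,\hsum,\hdiff,\hempty)$ and the $\Z$-action as $nH = u \mapsto n\cdot H(u)$, with all axioms inherited pointwise from $\Z$. The paper's proof is just the one-line version of what you have spelled out in detail.
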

\begin{proof}
The abelian group structure is given by $(\Z^U,\hsum, \hdiff, \hempty)$, and
$\Z$ acts on hybrid sets by $n H = u \mapsto n\cdot H(u)$.
\end{proof}

We need two more technical definitions, which will be useful later.
\begin{definition}
We say that two hybrid sets $A$ and $B$ are \emph{disjoint} if
$A\hmult B = \emptyset$.
\end{definition}

\begin{definition}
  We call a hybrid set \emph{reducible} if all its members have multiplicity
  $1$. We define a reduction function, \reduce{\cdot}, on reducible
  hybrid sets that returns the (normal) set of members of the hybrid set.
\end{definition}

We should note that these hybrid sets (sometimes also called generalised
sets) have been studied before.  Hailperin 
\cite{Hailperin-Both-BoolesLogicAndProb}
makes the case that Boole~\cite{Boole1854-LawsOfThought} actually started
from hybrid sets for his algebraization of logic, but restricted himself
to nilpotent solutions of the resulting equations, which then correspond
closely to our modern notion of Boolean algebra.
Whitney wrote two nice 
papers~\cite{Whitney32-bullAMS:logical-expansion,Whitney33-annals:characteristic-fns}
taking up the theme of algebraising logic via characteristic functions.  He
does allow arbitrary multiplicities, and derives some nice normal forms
for certain kinds of partitions, in a way foreshadowing some of our own
results (see \S\ref{sec:gen-parts} and \S\ref{sec:hybrid-dom-decomp}).
Blizard~\cite{Blizard89-MultisetTheory} focuses on multisets (disallowing
negative multiplicities) but has an extensive bibliography of related
works, several of which being on (mechanised) theorem proving; he then
formalised sets with negative membership in~\cite{Blizard90-NegativeMembership}.
Blizard concentrates on concepts of union and intersection which closely
resemble those of normal set theory, although he does also define the
sum union (but not other related concepts).
Burgin~\cite{Burgin92-ConceptOfMultisetsInCybernetics} lists several
more works on hybrid sets, some reaching back to the early middle ages.
Syropoulos~\cite{Syropoulos01-MathematicsOfMultisets} gives a very readable
introduction to both multisets and hybrid sets.

\section{Generalisations}
\label{sec:generalisations}

We now revisit a few basic mathematical constructs and show how they may be
modified to work with hybrid sets. This will provide the machinery that we need 
for symbolic domain decomposition. First, we will examine the notion of a hybrid
function on a domain. We then show how sets and hybrid sets may be decomposed
using a notion of generalised partitions --- an extension of partitions to the
hybrid set case. We then address the practical issue of how to make two hybrid
partitions compatible by constructing a common refinement. Finally, when working
with functions defined over hybrid partitions, we need some way to compute
values. Over any given point in the domain, we need to know which functions must
be evaluated in computing the final value, which is rather complex for hybrid
functions over generalised partitions.  For this task, we introduce the notion
of pseudo-functions. When expressions on generalised partitions are evaluated,
these pseudo-functions avoid evaluating at places where the functions are
undefined or where the values are not needed. 
This allows us to deal with the cases, as in equation (\ref{identity:int}),
where component functions are not defined on some
parts of the domain decomposition but any application of the function 
in those places would anyway have multiplicity zero 
(\textit{i.e.} not be used).

\subsection{Functions of hybrid domain}
\label{sec:fn-hybrid}

It turns out that a useful definition of a ``function'' involving hybrid sets is
not entirely straightforward. Defining its graph is easiest. The underlying
intuition is that we capture the restriction of a function to a domain through
the multiplicities of the elements of the function graph in a hybrid set. The
hybrid set of a single element of the function graph for element $x$ in $U$ is
of the form $\hset{(x,f(x))^{1}}$. Therefore the scalar multiplication of that
set by the multiplicity of $x$ in $A$ will impose the appropriate restriction.
We use our function restriction notation of Def.~(\ref{def:fn-restriction}) only
for this hybrid version of function restriction henceforth.
\begin{definition}\label{def:hybrid-graph}
Let $A$ be a hybrid set over $U$, $B\subseteq U$, $S$ a set and 
$f:B\rightarrow S$ a (total-on-$B$) function.
A \emph{hybrid function} $f^A : U\times S \rightarrow \Z$ is defined by
$$f^A = \bigoplus_{x\in B} A(x) \hset{(x,f(x))^{1}} $$
\end{definition}
Note how the hybrid set $\Z$-module structure automatically takes care of
restricting the sum over the support of $A$.  Caution: some hybrid sets
do not form a hybrid function (for example $\hset{(1,1)^1, (1,2)^1}$
is not a hybrid function).

Our definitions work just as well with partial functions as with total
functions. But if for a hybrid function $f^F$, $f$ is undefined at some point in
the support of $F$, $f^F$ will not be defined at that point either. So, without
loss of generality, we can always restrict $F$ to where $f$ is defined. For the
remainder of this paper, we shall assume this, i.e.~whenever we write a hybrid
function $f^F$, $f$ is total over $\supp F$.

\begin{definition}
We call a hybrid function $f^H$ \emph{reducible} if the hybrid set $H$ is
reducible.  We extend $\reduce{\cdot}$ in this case by
$$\reduce{f^H}(x) = 
    \begin{cases}
      f(x) & \text{if $H(x)=1$}\\
      \bot & \text{if $H(x)=0$}
    \end{cases}$$
\end{definition}
We can generalise the join combinator to hybrid
sets.  This definition is quite central to ``making things work''.

\begin{definition}
  The join, $f^F \join g^G$, of two hybrid functions 
  $f^F$ and $g^G$ (with codomain $B$), 
  gives a \emph{hybrid relation}, a subset of
  $U\times B \times \Z$ given by
  $$ f^F \join g^G \defeq f^F \oplus g^G $$
\end{definition}
This is a rather ``dangerous'' definition, as it moves us from the land
of functions to that of relations.  In other words, it is quite possible
that $f^F\join g^G$ restricted to $U\times B$ is no longer the graph of a
function, but the graph of a relation.  But this extra generality will 
be quite useful for us, although we will have to prove that in the cases
which interest us, the resulting hybrid relations are in fact (reducible)
hybrid functions.

\begin{theorem}\label{hybrid-join}
Let $A$, $B$ be hybrid sets over $U$, $S$ an arbitrary set, and
$f:U\rightarrow S$ a total function.  Then
\begin{enumerate}
\item $\reduce{f^{\emptyset}}$ is the empty function,
\item $f^A \join f^A = f^{2A}$
\item $f^A \join f^B = f^{A\hsum B}$, and thus a hybrid function,
\item For $g:U\rightarrow S$ another total function, then
 $f^A \join g^B = (f \join g)^{A\hsum B}$ if and only if
$A\hsum B=\emptyset$ (where $f \join g$ is the join of regular functions).
\item Let $H_1,H_2$ be hybrid sets, with $\supp H_1$ and $\supp H_2$ disjoint,
$f_1:\supp H_1 \rightarrow S$ and $f_2:\supp H_2 \rightarrow S$, then
$f_1^{H_1} \join f_2^{H_2} = (f_1 \join f_2)^{H_1 \hsum H_2}$
\end{enumerate}
\end{theorem}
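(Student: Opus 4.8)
The plan is to translate every claim into pointwise arithmetic over $\Z$ on the universe $U\times S$, using the dictionary supplied by Definition~\ref{def:hybrid-graph}: the hybrid function $f^A$, viewed as an element of $\Z^{U\times S}$, is supported on the graph of $f$, with $f^A(x,s)=A(x)$ when $s=f(x)$ and $0$ otherwise, while $\join$ is by definition nothing but $\oplus$, i.e.\ pointwise addition. With this in hand the first three items are immediate. For (1), $f^{\emptyset}=\bigoplus_{x}\emptyset(x)\hset{(x,f(x))^{1}}=\emptyset$, and $\reduce{\cdot}$ sends the empty hybrid set to the everywhere-undefined function. For (3), over each point $(x,f(x))$ the two summands contribute $A(x)$ and $B(x)$, so $f^A\oplus f^B$ has multiplicity $A(x)+B(x)=(A\hsum B)(x)$ there and vanishes off the graph of $f$; this is precisely $f^{A\hsum B}$, and because it is already in the canonical form $h^C$ it is a genuine hybrid function. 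Item (2) is then the instance $B=A$ of (3), together with $A\hsum A=2A$.

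For item (5) I would first record that disjointness of $\supp H_1$ and $\supp H_2$ forces $f_1$ and $f_2$ to have disjoint domains, so the ordinary join $f_1\join f_2$ is an honest function, equal to $f_1$ on $\supp H_1$ and to $f_2$ on $\supp H_2$. I would then compare both sides pointwise: on $\supp H_1$ the exponent $H_1\hsum H_2$ agrees with $H_1$ (since $H_2$ vanishes there) with value $f_1$, symmetrically on $\supp H_2$, and off both supports every multiplicity is $0$. Because the two graphs never collide over a common first coordinate, no cancellation or clash can occur, and the values of $f_1^{H_1}\oplus f_2^{H_2}$ match those of $(f_1\join f_2)^{H_1\hsum H_2}$ term by term.

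Item (4) is where the real content lies, and I expect it to be the main obstacle. The pivotal observation is that since $f$ and $g$ are both total on $U$, at every $x$ both $f(x)$ and $g(x)$ are defined, so the ``otherwise'' clause of the regular join always fires and $f\join g$ is the empty function; hence the right-hand side $(f\join g)^{A\hsum B}$ is the empty hybrid set irrespective of the exponent, and the claim collapses to the question of when the hybrid relation $f^A\oplus g^B$ over $U\times S$ vanishes identically. The forward direction is the easy half: if $f^A\oplus g^B=\emptyset$, then summing the multiplicities lying over any fixed $x$ yields $A(x)+B(x)=0$, i.e.\ $A\hsum B=\emptyset$. The delicate half is the converse, and the crux is controlling the interaction of the two graphs: the contribution $A(x)$ at $(x,f(x))$ and the contribution $B(x)$ at $(x,g(x))$ can only cancel when they sit at the same point of $U\times S$, which forces one to separate the locus where $f(x)=g(x)$ from the locus where $f(x)\neq g(x)$. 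This is exactly the step on which I would spend the most effort, since it is precisely here that $f^A\oplus g^B$ threatens to leave the world of hybrid functions and become a genuine hybrid relation; making the converse go through requires showing that, under the hypothesis $A\hsum B=\emptyset$, the off-graph contributions must vanish while the on-graph contributions genuinely combine through $A\hsum B$, so that the relation collapses back to $\emptyset$.
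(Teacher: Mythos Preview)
The paper itself omits the proof, saying only that the items ``follow straightforwardly from the definitions.'' Your unfolding of the definitions is exactly that route, and items (1)--(3) and (5) are handled correctly and completely.

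The gap is in item (4), precisely where you sense trouble. You correctly reduce the claim to ``$f^A\oplus g^B=\emptyset$ if and only if $A\hsum B=\emptyset$'' (since $f,g$ are both total on $U$, the ordinary join $f\join g$ is nowhere defined, so $(f\join g)^{A\hsum B}=\emptyset$ regardless of the exponent), and your forward implication via summing the multiplicities in each fibre over $x$ is fine. But the converse, which you flag as the delicate half and leave as something that ``requires showing,'' cannot be completed: the off-graph contributions you say ``must vanish'' in general do not. Take $U=\{0\}$, $S=\{0,1\}$, $f(0)=0$, $g(0)=1$, $A=\hset{0^{1}}$, $B=\hset{0^{-1}}$. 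Then $A\hsum B=\emptyset$, yet
\[
f^A\oplus g^B=\hset{(0,0)^{1},(0,1)^{-1}}\neq\emptyset.
\]
More generally, whenever $f(x)\neq g(x)$ at some $x\in\supp A$, the summands $A(x)\hset{(x,f(x))^{1}}$ and $B(x)\hset{(x,g(x))^{1}}=-A(x)\hset{(x,g(x))^{1}}$ sit at distinct points of $U\times S$ and cannot cancel. So your hesitation is well placed: the converse of (4) as stated is false, and no amount of effort on that step will close it without an additional hypothesis (such as $f=g$ on $\supp A=\supp B$, which collapses (4) back into (3)).
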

The proofs are omitted, and follow straightforwardly from the definitions.
Note the strong dichotomy between (3) and (4), which comes from the fact that
the non-hybrid $\join$ is designed to work with functions defined over
separate regions.

\subsection{Generalised Partitions}
\label{sec:gen-parts}

Theorem~\ref{hybrid-join} tells us that some collections of hybrid sets
are better than others.  Being disjoint is much too strong a property.
Nicely, for hybrid sets, partitions easily generalise in useful ways.

\begin{definition}
We define a generalised partition of a (hybrid) set, $P$, to be a 
finite collection of hybrid sets, $\ordcollect{n} P_i$, such that 
$P_1 \hsum P_2 \hsum \dots \hsum P_n = P$
\end{definition}
All set partitions of a set are also generalised partitions.  
Conversely, a generalised partition of a reducible set is a set partition
if and only if each generalised partition element is reducible.

\begin{remark}
We have lifted the \emph{disjointness} condition on 
partitions.  For something to be called a partition of $P$, it is
necessary that the result be equal to $P$.  Still, $P$ belongs to a 
larger universe $U$, and a generalised partition's pieces range over
$U$.  As long as, in the end, all elements of $U\setminus P$ have multiplicity
$0$, we get a generalised partition.  In this way, we have also lifted
the \emph{coverage} condition. 
\end{remark}

\begin{proposition}
For any generalised partition $\ordcollect{n} P_i$ of a hybrid set $P$ over $U$, arbitrary set
$S$, and any function $f:\supp P \rightarrow S$,
\[
f^P  =  f^{P_1}\join f^{P_2}\join \ldots \join f^{P_n} =  f^{P_1\hsum P_2\hsum \ldots \hsum P_n}
\]
is a hybrid function.
\end{proposition}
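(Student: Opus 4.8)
The plan is to exploit that the join of hybrid functions is \emph{defined} to be the hybrid-set sum $\oplus$, and to reduce the whole statement to a repeated use of part~(3) of Theorem~\ref{hybrid-join}. Since $\oplus$ is just pointwise addition in the hybrid sets over $U\times S$, it is associative and commutative, so the iterated join $f^{P_1}\join\dots\join f^{P_n}$ is unambiguous regardless of bracketing. There are then two equalities in the chain to verify: the right-hand one, $f^{P_1}\join\dots\join f^{P_n}=f^{P_1\hsum\dots\hsum P_n}$, which carries the real content, and the left-hand one, $f^{P}=f^{P_1\hsum\dots\hsum P_n}$, which is immediate because $P_1\hsum\dots\hsum P_n=P$ by the definition of a generalised partition. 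The common value, being literally of the form $f^{(\cdot)}$ for a hybrid set, is a hybrid function.

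For the right-hand equality I would induct on $n$. The case $n=1$ is trivial; for the step I would use the inductive hypothesis to rewrite $f^{P_1}\join\dots\join f^{P_n}$ as $(f^{P_1\hsum\dots\hsum P_{n-1}})\join f^{P_n}$ and then apply Theorem~\ref{hybrid-join}(3) with $A=P_1\hsum\dots\hsum P_{n-1}$ and $B=P_n$, collapsing it to $f^{P_1\hsum\dots\hsum P_n}$; that theorem also certifies that the intermediate results are hybrid functions. A more direct, and to my mind more transparent, route is to compute multiplicities: for each $(x,s)\in U\times S$, Definition~\ref{def:hybrid-graph} gives $f^{P_i}(x,s)=P_i(x)\,[\,s=f(x)\,]$, so by pointwise addition
\[
\bigl(f^{P_1}\oplus\dots\oplus f^{P_n}\bigr)(x,s)=\Bigl(\sum_{i=1}^{n}P_i(x)\Bigr)[\,s=f(x)\,]=P(x)\,[\,s=f(x)\,]=f^{P}(x,s),
\]
where $[\,\cdot\,]$ denotes the indicator of its condition; this gives the equality of hybrid sets at once.

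The one step needing genuine care --- and the main obstacle --- is the \emph{definedness} of $f$. The statement only supplies $f:\supp P\rightarrow S$, whereas each factor $f^{P_i}$ presupposes, by our standing convention, that $f$ be total over $\supp P_i$, and cancellation among the pieces can place some $x\in\supp P_i$ outside $\supp P$, precisely where the $P_i(x)$ are not all zero yet sum to $0$. I would handle this by fixing an arbitrary total extension $\bar f:U\rightarrow S$ of $f$ and running either argument above for $\bar f$, for which Theorem~\ref{hybrid-join}(3) applies verbatim since its hypothesis is a total function. The choice of extension is immaterial: at every $x\notin\supp P$ we have $\sum_i P_i(x)=P(x)=0$, so the contributions of $\bar f$ at those points cancel in the total sum, leaving exactly $f^{P}$. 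This is also what legitimises the pointwise identity at the cancellation points.
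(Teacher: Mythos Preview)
The paper does not supply a proof of this proposition; it is stated immediately after Theorem~\ref{hybrid-join} as an evident consequence of part~(3), and your argument is exactly that intended route (iterated application of $f^A\join f^B=f^{A\hsum B}$, or the equivalent pointwise multiplicity computation). Your treatment of the definedness issue---that $f$ is only assumed total on $\supp P$ while Theorem~\ref{hybrid-join}(3) is stated for $f$ total on $U$, and that the individual $\supp P_i$ may spill outside $\supp P$---is a genuine refinement: the paper glosses over this, and your fix via an arbitrary total extension $\bar f$ together with the cancellation $\sum_i P_i(x)=P(x)=0$ off $\supp P$ is correct and is precisely the mechanism the paper later formalises with pseudo-functions.
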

For brevity, we sometimes write $f^P$ for either of the right-hand
side expressions above.
When we want to join different functions over a partition and still get a 
function, we have to be careful and ensure we are joining ``compatible''
functions.

\begin{definition}
Let $P_1,P_2$ be a generalised partition of a hybrid set $P$ over $U$,
$S$ an arbitrary set and $f^{P_1}:P_1\rightarrow S$, $g^{P_2}:P_2\rightarrow S$
hybrid functions.  We say that $P_1,P_2$ is a \emph{compatible} partition for
$f,g$ if
$f(x)=g(x)$ for all $x\in \supp P_1 \cap \supp P_2$.
\end{definition}

\begin{theorem}
Using the same notation as above, $f^{P_1}\join g^{P_2}$ is a 
hybrid function if and only if $P_1,P_2$ is a compatible partition for
$f,g$.
\end{theorem}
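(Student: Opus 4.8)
The plan is to reduce the statement to a pointwise (fibre-wise) criterion for when a hybrid relation over $U\times S$ is in fact a hybrid function, and then to read off both directions by inspecting the join $f^{P_1}\join g^{P_2} = f^{P_1}\hsum g^{P_2}$ one point $x\in U$ at a time. The enabling observation, already implicit in Definition~\ref{def:hybrid-graph} and in the cautionary example $\hset{(1,1)^1,(1,2)^1}$, is the following: a hybrid relation $R\colon U\times S\rightarrow\Z$ equals $h^C$ for some ordinary function $h$ and hybrid set $C$ exactly when, for every $x\in U$, there is at most one $s\in S$ with $R(x,s)\neq 0$. First I would record this criterion as a small lemma, since the whole theorem turns on it: given such an $R$, one recovers $h(x)$ as the unique such $s$ and $C(x)$ as the value $R(x,s)$ there, while the converse is immediate from the defining formula for $h^C$.

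Next I would compute the relevant fibre explicitly. Fix $x$. In $f^{P_1}\hsum g^{P_2}$ the only pairs over $x$ that can have nonzero multiplicity are $(x,f(x))$ and $(x,g(x))$, carrying the contributions $P_1(x)$ and $P_2(x)$ respectively; every other pair over $x$ has multiplicity $0$. For the ($\Leftarrow$) direction, suppose $P_1,P_2$ is compatible. If $x\in\supp P_1\cap\supp P_2$ then $f(x)=g(x)$, so both contributions land on the single pair $(x,f(x))$ with total multiplicity $P_1(x)+P_2(x)$; if $x$ lies in only one support, at most one pair is nonzero; and if in neither, the fibre is empty. In every case the fibre over $x$ has at most one nonzero member, so by the criterion the join is a hybrid function.

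For the ($\Rightarrow$) direction I would argue by contraposition. If $P_1,P_2$ is not compatible, choose $x_0\in\supp P_1\cap\supp P_2$ with $f(x_0)\neq g(x_0)$. Then $(x_0,f(x_0))$ has multiplicity $P_1(x_0)$ and $(x_0,g(x_0))$ has multiplicity $P_2(x_0)$, both nonzero since $x_0$ lies in both supports. Because $f(x_0)\neq g(x_0)$ these are two \emph{distinct} pairs over the same $x_0$, so the fibre over $x_0$ has two nonzero members, the criterion fails, and hence the join is not a hybrid function.

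The one place requiring care --- and what I would flag as the main obstacle --- is precisely this last distinctness argument: one must ensure that the two contributions cannot silently cancel and thereby accidentally restore the function property. They cannot, because $\hsum$ acts pointwise on $U\times S$ and $f(x_0)\neq g(x_0)$ places $P_1(x_0)$ and $P_2(x_0)$ at \emph{different} points of the codomain, where no addition between them can occur. Symmetrically, in the compatible case it is this same pointwise-in-$S$ behaviour that makes the two contributions add at the common point $f(x)=g(x)$ rather than spawning a second fibre member; and it does no harm there even when $P_1(x)+P_2(x)=0$, since an empty fibre still satisfies the criterion, the resulting function simply being undefined at $x$.
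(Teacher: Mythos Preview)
Your proof is correct. The paper does not actually supply a proof of this theorem; it is stated without argument, in the same spirit as the preceding theorem on properties of $\join$, whose proofs are explicitly omitted as following ``straightforwardly from the definitions''. Your fibre-wise criterion --- that a hybrid relation $R\colon U\times S\to\Z$ arises as some $h^C$ precisely when each fibre $\{s : R(x,s)\neq 0\}$ has at most one element --- is exactly the right unpacking of the definition of $f^A$, and both directions then fall out by the case analysis you give. The non-cancellation point you highlight is indeed the only real subtlety, and you handle it correctly: since $\hsum$ acts coordinatewise on $U\times S$, the contributions $P_1(x_0)$ and $P_2(x_0)$ land at distinct second coordinates when $f(x_0)\neq g(x_0)$ and so cannot interfere. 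Your argument is essentially the canonical one the paper is gesturing at.
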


It is important to note that although $\join$ is an associative, commutative
operation, the notion of compatibility, while commutative, does not lift
to a simple associative condition.

\begin{remark}
Some of our computations will purposefully use
\emph{incompatible} partitions.  Note that
$$(f^U\join g^U) \join g^{\hdiff U} = f^U\join (g^U\join g^{\hdiff U})
     = f^U \join g^{\emptyset} = f^U$$
but that $f^U\join g^U$ is in general a hybrid relation, yet the final
result is a hybrid function whenever $f^U$ is.  We will ``design'' our
hybrid partitions with this feature in mind.
\end{remark}

\subsection{Refinement}

To do arithmetic with hybrid functions,
we first need the notion of a refinement and a common refinement.
This is similar to the treatment of \cite{Carette:2007:ISSAC} for piecewise functions.
\begin{definition}
  A \emph{refinement} of a generalised partition $\collect_I P_i$ of $P$ is another
  generalised partition $\collect_J Q_j$ (of another hybrid set $Q$ not necessarily
  equal to $P$) such that for every $i \in I$ there exists a sub-collection $Q_{j_k}$
  of $Q_j$ such that $P_i = Q_{j_1} \oplus Q_{j_2} \oplus \ldots \oplus
  Q_{j_m}$. A \emph{common refinement} of a set of generalised partitions is a
  generalised partition that is simultaneously a refinement to every partition
  in the set.
\end{definition}
A refinement in this sense may well seem ``larger'' than the original
partition, as in the next example.

\begin{example}
Let the interval $P=[0,1]$, seen as $\hset{P^1}$,
and $I_1=[-1,0)$, $I_2=(1,2]$, $I_3=[-1,2]$, then
$Q = \hset{I_1^{-1}, I_2^{-1}, I_3^{1}}$ is
a refinement of $P$.
\end{example}

\begin{definition}  A refinement is called \emph{strict} if, for
each generalised partition being refined, the support of the associated
sub-collection is equal to the support of the generalised partition it refines.
\end{definition}

\begin{example}
$\hset{[0,1]^1},\hset{(1,2]^1},\hset{(2,3]^1}$ is a common strict refinement of
the two (trivial) hybrid partitions $\hset{[0,2]^1}$ and $\hset{(1,3]^1}$.
\end{example}

\subsection{Pseudo-functions and pseudo-relations}

As seen in the example above, a refinement may ``spill over'' the original
domain, so that if we look at a hybrid function $f^P$ where the underlying
$f$ is defined exactly on (the support of) $P$, $f^Q$ evaluated ``pointwise''
will not make sense.  Nevertheless, we want $f^P=f^Q$.  To achieve this,
we apply the lambda-lifting trick already used in
\cite{Carette:2007:ISSAC}.

\begin{definition}
Using the same notation as in Def.~\ref{def:hybrid-graph},
we define a \emph{pseudo-function} $\tilde{f}^A$ as
$$\tilde{f}^A = \bigoplus_{x\in B} A(x) \hset{(x,f)^{1}} $$
i.e. as a member of $U\times (U\rightarrow S) \rightarrow \Z$.
A \emph{pseudo-relation} is defined similarly.
The \emph{evaluation} of a pseudo-function (resp. relation) is defined
by mapping each point $(x,f)^k$ to $(x,f(x))^k$.
\end{definition}

The usefulness of pseudo-functions comes from the following property.
\begin{proposition}
For all refinements $Q$ of the generalised partition $P$,
$\tilde{f}^P = \tilde{f}^Q$.
\end{proposition}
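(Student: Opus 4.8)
The plan is to exploit the $\Z$-linearity of the assignment $A \mapsto \tilde{f}^A$ together with the fact that refinement leaves the total hybrid set unchanged. First I would unfold the right-hand side. Writing the refinement as a generalised partition $\collect_J Q_j$, the pseudo-function $\tilde{f}^Q$ is the join of the $\tilde{f}^{Q_j}$, which by the definition of the pseudo-relation join is simply their hybrid sum $\bigoplus_{j\in J}\tilde{f}^{Q_j}$. Each summand $\tilde{f}^{Q_j} = \bigoplus_{x} Q_j(x)\,\hset{(x,f)^1}$ is linear in the multiplicities $Q_j(x)$, so I may interchange the two sums to obtain $\tilde{f}^Q = \bigoplus_{x}\bigl(\textstyle\sum_{j} Q_j(x)\bigr)\hset{(x,f)^1} = \tilde{f}^{\,\bigoplus_{j} Q_j}$, reducing the whole identity to a statement about the single total hybrid set $\bigoplus_j Q_j$.

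The next step is the key algebraic fact: the total hybrid set of a refinement equals that of the partition it refines. Grouping the pieces $Q_j$ by which $P_i$ they refine---each $Q_j$ being used in exactly one group---the defining relation $P_i = Q_{j_1}\hsum\dots\hsum Q_{j_m}$ sorts the index set $J$ into blocks that sum to the respective $P_i$; summing over $i$ then gives $\bigoplus_{j} Q_j = \bigoplus_{i} P_i = P$. Substituting into the previous display yields $\tilde{f}^Q = \tilde{f}^P$ at once, since $\tilde{f}$ depends on its superscript only through the underlying multiplicity function.

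The subtle point---and where the pseudo-function machinery earns its keep---is that individual pieces $Q_j$ may \emph{spill over} the support of $P$, landing on points $x$ at which the underlying $f$ is undefined; the earlier example with $I_3 = [-1,2]$ is exactly this phenomenon. For an ordinary hybrid function the symbol $(x, f(x))$ would be meaningless at such points, so the termwise manipulation above could not even be written down. The pseudo-function instead carries the \emph{unevaluated} pair $(x,f)$, so every term $Q_j(x)\,\hset{(x,f)^1}$ is well defined whether or not $x\in\supp P$, and the interchange of sums is legitimate. The spillover then does no harm: at any such $x$ the contributions from the various pieces sum to $\bigl(\bigoplus_j Q_j\bigr)(x) = P(x) = 0$, and $0\cdot\hset{(x,f)^1} = \hempty$ annihilates the term. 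I expect this interchange across the spillover region to be the only place demanding genuine care; once the deferred evaluation is seen to justify it, the identity falls out.
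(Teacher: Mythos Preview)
Your argument is exactly what the paper has in mind; the paper in fact gives no formal proof, only the one-line remark that ``if we first `simplify' $\tilde{f}^Q$ by performing $\bigoplus_i Q_i$ to get $P$, we get a result $\tilde{f}^P$,'' which is precisely the linearity-then-collapse manoeuvre you spell out.  Your explanation of why the deferred-evaluation pair $(x,f)$ is what legitimises the interchange of sums over the spillover region is the real content and goes beyond anything the paper writes down.

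One caveat worth recording: the step ``each $Q_j$ being used in exactly one group'' is not literally licensed by the paper's definition of refinement, which only demands that every $P_i$ be the sum of \emph{some} sub-collection of the $Q_j$, with no requirement that these sub-collections be pairwise disjoint or exhaust $J$.  Without that reading the proposition is actually false (e.g.\ refine the one-piece partition $\{[0,1]\}$ by the two-piece collection $\{[0,1],[0,1]\}$).  The paper's own informal argument makes the same tacit assumption that $\bigoplus_j Q_j = P$, so this is a defect of the definition rather than of your proof; but you should be aware that the grouping step needs that intended reading to go through.
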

In other words, even though the pieces of $Q$ might ``spill over'',
if we first ``simplify'' $\tilde{f}^Q$ by performing $\bigoplus_i Q_i$
to get $P$, we get a result $\tilde{f}^P$ which can then be safely
evaluated.  We will elide the $\tilde{\;}$ to lighten the notation whenever
this would not lead to confusion.

Another useful property of pseudo-functions is that in some cases we can
simplify them, regardless of what the underlying function does.
\begin{proposition}
$\tilde{f}^P \join \tilde{g}^Q \join \tilde{g}^{\hdiff Q} = \tilde{f}^P$
\end{proposition}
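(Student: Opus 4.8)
The plan is to reduce everything to the underlying $\Z$-module structure on $\Z^{U\times(U\rightarrow S)}$ and never to attempt a pointwise evaluation of the intermediate expressions. First I would unfold the join: since $\join$ on pseudo-functions is by definition the hybrid sum $\oplus$, and since $(\Z^{U\times(U\rightarrow S)},\hsum,\hdiff,\hempty)$ is an abelian group, the expression $\tilde{f}^P \join \tilde{g}^Q \join \tilde{g}^{\hdiff Q}$ is simply $\tilde{f}^P \oplus \tilde{g}^Q \oplus \tilde{g}^{\hdiff Q}$, and associativity and commutativity let me regroup it as $\tilde{f}^P \oplus (\tilde{g}^Q \oplus \tilde{g}^{\hdiff Q})$.

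The key step is to show that $A\mapsto \tilde{g}^A$ is $\Z$-linear in its hybrid-set argument. From the definition $\tilde{g}^A = \bigoplus_{x} A(x)\hset{(x,g)^{1}}$, pointwise addition of multiplicities gives $\tilde{g}^{A\hsum B} = \tilde{g}^A \oplus \tilde{g}^B$ directly from arithmetic in $\Z$. Applying this with $B = \hdiff Q$ and using $Q \hsum (\hdiff Q) = Q\hdiff Q = \hempty$, I get $\tilde{g}^Q \oplus \tilde{g}^{\hdiff Q} = \tilde{g}^{\hempty}$. Finally $\tilde{g}^{\hempty} = \bigoplus_x 0\cdot\hset{(x,g)^{1}} = \hempty$, so the two $\tilde{g}$ terms annihilate each other and $\tilde{f}^P \oplus \hempty = \tilde{f}^P$ finishes the argument.

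The point I would emphasise — and the only place where care is genuinely required — is \emph{why} this manipulation is legitimate even though $\tilde{f}^P \join \tilde{g}^Q$ is in general only a pseudo-relation and not a pseudo-function. The cancellation must be carried out entirely within the $\Z$-module $\Z^{U\times(U\rightarrow S)}$: in a pseudo-function the second coordinate records the unevaluated function $g$ itself, not a value $g(x)$, so the elements $(x,g)$ contributed by $\tilde{g}^Q$ and by $\tilde{g}^{\hdiff Q}$ sit at exactly the same points and cancel with opposite multiplicities $Q(x)$ and $-Q(x)$. This is precisely the feature the remark following the compatibility theorem anticipates: no hypothesis on $g$ (definedness, compatibility with $f$, or agreement on overlaps) is needed, because nothing is ever evaluated. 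I do not expect a serious obstacle; the main thing to get right is to resist interpreting the middle expression as a function and to keep all reasoning at the level of $\oplus$ and scalar multiplication.
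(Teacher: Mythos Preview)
Your argument is correct and is essentially the one the paper has in mind: the paper states this proposition without proof, but your unfolding of $\join$ to $\oplus$, the linearity $\tilde{g}^{A\hsum B}=\tilde{g}^A\oplus\tilde{g}^B$, and the cancellation $\tilde{g}^Q\oplus\tilde{g}^{\hdiff Q}=\tilde{g}^{\hempty}=\hempty$ exactly mirror Theorem~1(3) and the Remark in \S3.2 where the same computation is done at the hybrid-function level. Your emphasis that the cancellation happens in $\Z^{U\times(U\rightarrow S)}$ before any evaluation is precisely the point the paper is making when it introduces pseudo-functions.
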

One of the chief advantages of pseudo-functions is that we can do some symbolic
manipulations of expressions in terms of these functions as if they were 
defined on a much larger domain, as long as the eventual term we evaluate
does not involve any of these ``virtual'' terms.

To aid in such computations, when we have pseudo-functions $\tilde{f}^P$
and $\tilde{g}^P$, with $f,g:\supp P\rightarrow S$, and some binary operation
$\star: S\times S\rightarrow S$, we will allow ourselves to write
expressions such as $\tilde{f}^P\star\tilde{g}^P$ in the induced term algebra
over pseudo-functions.  As usual, we lift evaluation pointwise,
$(\tilde{f}\star\tilde{g})(x) = f(x)\star g(x)$.  Furthermore we say that
$\tilde{f} = \tilde{g}$ over a set $\supp P$ whenever 
$\forall x\in\supp P.f(x) = g(x)$.  In other words, we use extensional equality
for the intensional terms $\tilde{f}$ and $\tilde{g}$.  This
means that properties like commutativity, associativity and having 
inverses lift to the term algebra.  As an example, we have that
\begin{proposition}
If $\star:S\times S\rightarrow S$ is associative and commutative then
$$\tilde{g}^Q \star \tilde{f}^P \star \tilde{g}^{\hdiff Q} = \tilde{f}^P$$
\end{proposition}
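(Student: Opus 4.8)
The plan is to reduce the three-fold $\star$-product to the trivial one by peeling off the matched pair $\tilde{g}^Q$ and $\tilde{g}^{\hdiff Q}$, in exact parallel to the $\join$-version of this cancellation handled just above. Since the surrounding text has already established that commutativity and associativity of $\star$ lift to the induced term algebra over pseudo-functions, I would first rewrite $\tilde{g}^Q \star \tilde{f}^P \star \tilde{g}^{\hdiff Q} = \tilde{f}^P \star \bigl(\tilde{g}^Q \star \tilde{g}^{\hdiff Q}\bigr)$, so that the whole statement reduces to the single claim that $\tilde{g}^Q \star \tilde{g}^{\hdiff Q}$ is the $\star$-identity of the term algebra, together with the unit law $\tilde{f}^P \star e = \tilde{f}^P$.

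Next I would make precise how $\star$ acts on pseudo-functions carrying different, and in particular negative, hybrid-set multiplicities; since the definition of $\star$ is given only for pseudo-functions over a common $P$, I would first pass to a common refinement of $Q$, $P$ and $\hdiff Q$, which is sound by the refinement-invariance $\tilde{f}^P = \tilde{f}^Q$ proved earlier, so that all three terms are carried on a common support. The reading I would then adopt is the one implicit in Def.~\ref{def:hybrid-graph}: a pseudo-function $\tilde{h}^A$ is the formal $\star$-word in which each letter $(x,h)$ occurs with exponent $A(x)\in\Z$, a positive exponent meaning repeated $\star$-application of $h$ at $x$, a zero exponent meaning absence, and a negative exponent meaning a formal $\star$-inverse. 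Under this reading $\hdiff Q = \hempty \hdiff Q$ simply negates every exponent, so at each point $x$ the two contributions $g(x)$ with exponent $Q(x)$ and $g(x)$ with exponent $-Q(x)$ combine to exponent $Q(x)+(-Q(x))=0$. Hence $\tilde{g}^Q \star \tilde{g}^{\hdiff Q}$ collapses, point by point, to the empty word, \emph{i.e.}\ the $\star$-identity, and the unit law then yields $\tilde{f}^P$.

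I expect the genuine content --- and the main obstacle --- to be precisely the justification that a negative multiplicity behaves as a true $\star$-inverse in the term algebra, even though the hypotheses on $\star$ assert only associativity and commutativity and say nothing about inverses in $S$ itself. This is exactly the ``inverses lift to the term algebra'' claim the text advertises, and it is what makes the cancellation possible without assuming $S$ is a group: the inversion is carried entirely by the $\Z$-module structure on the multiplicities (the negation of $A$), not by any operation on values of $S$. Once this formal-word semantics is pinned down, the argument is identical in spirit to the earlier $\join$-cancellation $\tilde{f}^P \join \tilde{g}^Q \join \tilde{g}^{\hdiff Q} = \tilde{f}^P$, with the $\oplus$ of multiplicities there replaced by the $\star$-word reduction here, and I would present it as a short evaluation-at-each-point computation after setting up that semantics.
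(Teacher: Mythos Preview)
Your approach is correct and matches the paper's own reasoning. The paper does not give an explicit proof; it presents the proposition as an immediate ``example'' of the sentence just preceding it, namely that commutativity, associativity, and inverses lift to the induced term algebra over pseudo-functions. Your plan---rearrange via commutativity and associativity, then cancel $\tilde{g}^Q$ against $\tilde{g}^{\hdiff Q}$ using the $\Z$-module structure on multiplicities rather than any inverse in $S$---is exactly the intended content, and your identification of the one genuine subtlety (that negative multiplicity supplies the formal inverse, not $S$) is the point the paper's phrase ``having inverses lift to the term algebra'' is gesturing at. Your detour through a common refinement is more machinery than the paper invokes, but it is harmless and consistent with the surrounding development.
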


\section{Hybrid domain decomposition}
\label{sec:hybrid-dom-decomp}

We now have all the ideas necessary to decompose hybrid domains. An elegant
consequence of the formalism is that it allows us to use linear algebra to
construct the partitions that we require.

Let $\ordcollect{n} A_i$ and $\ordcollect{m} B_j$, be generalised
partitions of $U$. We want to find a generalised partition of $U$ that is a
common strict refinement of $A_i$ and $B_j$ and has minimal cardinality. The
cardinality restriction is to minimise the number of terms required for a
symbolic representation of the resulting domain decomposition. Thus we want to
choose a minimal generalised partition, $\collect_I P_i$ of $U$ such that, in the
$\Z$-module of hybrid sets and for some integers $a_{i,j}$, $b_{i,j}$ we
have $\bigoplus_i P_i = U$ and $\forall i : 1..n \wehave \bigoplus_j a_{i,j} P_j
= A_i$ and $\forall j : 1..m \wehave \bigoplus_i b_{i,j} P_i = B_j$.

Since this forms a system of $n+m+1$ simultaneous equations, of which only
$n+m-1$ can be independent, because $A$ and $B$ are, separately, partitions of
$U$, we need that number of independent variables to solve the system. Thus the
cardinality of the minimal partition is $n+m-1$ in the general case, and can
only be smaller in specific cases if there are some extra dependencies among
$U, A_1,\dots, A_{n-1}, B_1, \dots, B_{m-1}$.

This result generalises to a decomposition of $U$ into a minimal generalised
partition that is a common strict refinement of $r$ generalised partitions of
cardinality $n_1, \dots n_r$ respectively: The minimal partition required,
assuming full independence of the individual domain partitions, has cardinality
\begin{equation*}
  \left(\sum_{i=1}^r n_i\right) + 1 - r
\end{equation*}
If all the individual partitions have the same cardinality, $n$, this reduces to
$ r(n-1) + 1$.

We can automatically compute suitable minimal strict refinement partitions $U$
as follows. If we remove the equations for $A_n$ and $B_m$ from the system of
equations in order to get an independent set of simultaneous equations, we get
$n+m-1$ equations that can be written as a linear system in the $\Z$-module: 
\begin{equation*}
  \label{eq:indep-sim}
  C \cdot
  \begin{pmatrix}
    P_1 \\
    \vdots \\
    P_{n+m-1}
  \end{pmatrix}
  =
  \begin{pmatrix}
    U \\
    A_1 \\
    \vdots \\
    A_{n-1} \\
    B_1 \\
    \vdots \\
    B_{m-1}
   \end{pmatrix}
\quad \text{\normalsize where~}
  C = 
  \begin{pmatrix}
    1 & 1 & \dots & 1 \\
    a_{1,1} & a_{1,2} & \dots & a_{1,n+m-1} \\
    \vdots  &         &       & \vdots \\
    a_{n-1,1} & a_{n-1,2} & \dots & a_{n-1,n+m-1} \\
    b_{1,1} & b_{1,2} & \dots & b_{1,n+m-1} \\
    \vdots  &         &       & \vdots \\
    b_{m-1,1} & b_{m-1,2} & \dots & b_{m-1,n+m-1}
  \end{pmatrix}
\end{equation*}

Note that $C$ is an integer matrix. Further, the partition choice matrix, $C$,
must be invertible and $C^{-1}$ must be an integer matrix so that we obtain
integral partitions of each domain piece with respect to our partition $P$ of
$U$. An integer matrix is invertible and has an integer matrix inverse if and
only if it has a determinant of $+1$ or $-1$. Hence our problem of constructing
an appropriate partition reduces to choosing an integer matrix of the form of
$C$ such that its determinant is $\pm 1$. Finally, note that this directly
generalises to an arbitrary number of piecewise functions, each of an arbitrary
number of pieces.

If we restrict ourselves to triangular matrices, we can choose $C$ to be any
integer triangular matrix (upper because the first row of $C$ is all $1$s) for
which the product of the diagonal elements is $1$. Again, a simple way to do this
is to choose $C$ to be all $1$s along the top row, $1$s along the diagonal and
$0$ everywhere else. Another possibility is all $1$s in the whole upper
triangle.

For example, two suitable choice matrices with their inverses are
\begin{equation*}
  \begin{pmatrix}
    1      & \hdots & \hdots & \hdots & 1 \\
    0      & 1      & 0      & \hdots & 0 \\
    \vdots & \ddots & \ddots & \ddots & \vdots \\
    \vdots &        & \ddots & \ddots & 0\\
    0      & \hdots & \hdots & 0      & 1 
  \end{pmatrix}^{\!-1}\!\!\!\!=
  \begin{pmatrix}
    1      & -1     & \hdots & \hdots & -1\\
    0      & 1      & 0      & \hdots & 0 \\
    \vdots & \ddots & \ddots & \ddots & \vdots \\
    \vdots &        & \ddots & \ddots & 0\\
    0      & \hdots & \hdots & 0      & 1 
  \end{pmatrix}
\quad
  \begin{pmatrix}
    1      & \hdots & \hdots & 1 \\
    0      & \ddots &        & \vdots \\
    \vdots & \ddots & \ddots & \vdots \\
    0      & \hdots & 0      & 1 
  \end{pmatrix}^{\!-1}\!\!\!\!=
  \begin{pmatrix}
    1      & -1     & 0      & \hdots & 0 \\
    0      & \ddots & \ddots & \ddots & \vdots \\
    \vdots & \ddots & \ddots & \ddots & 0\\
    \vdots &        & \ddots & \ddots & -1\\
    0      & \hdots & \hdots & 0      & 1 
  \end{pmatrix}
\end{equation*}

\section{Applications}
\label{sec:applications}

\subsection{Arithmetic on Piecewise Functions}
\label{sec:app-piecewise}

We are now ready to generalise the arithmetic properties 
(see prop.~\ref{prop:base-arith}) of hybrid functions
and pseudo-functions.

\begin{proposition}\label{prop:distri1}
Let $\ordcollect{n} P_i$ be a partition of $P$, $f^P = f_1^{P_1}\join \ldots \join f_n^{P_n}$
and $g^P = g_1^{P_1}\join \ldots \join g_n^{P_n}$ be two hybrid functions on
$P\rightarrow S$.  Let $\star:(S\times S)\rightarrow S$,
then for all $x\in \supp P$,
$$ f^P(x) \star g^P(x) = (f_1(x) \star g_1(x))^{P_1} \join \ldots \join
    (f_n(x) \star g_n(x))^{P_n}.$$
\end{proposition}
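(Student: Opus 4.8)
The plan is to show that the two sides denote the same reducible hybrid function by comparing, for an arbitrary $x\in\supp P$ and $s\in S$, the multiplicity each side assigns to the pair $(x,s)$; equivalently, under the extensional reading set up for pseudo-functions, to check that the pointwise-lifted operation and the join evaluate to the same value at every $x\in\supp P$. Throughout I would write $f$ for the function glued from the pieces, with $f|_{\supp P_i}=f_i$, which is well defined precisely because $\ordcollect{n} P_i$ is a compatible partition for the pieces (this is what makes the hybrid relation $f^P$ an actual hybrid function); likewise $g$. The left-hand side is then the pointwise lift carried along $P$, i.e.\ the hybrid function $(f\star g)^P$, and the whole task is to identify the right-hand side with this.

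First I would unfold the join on the right using $f^F\join g^G\defeq f^F\hsum g^G$ and then Def.~\ref{def:hybrid-graph}, rewriting the right-hand side as
$$\bigoplus_{i=1}^{n}\;\bigoplus_{x\in\supp P_i} P_i(x)\,\hset{\bigl(x,\; f_i(x)\star g_i(x)\bigr)^{1}}.$$
Reading off the multiplicity assigned to a fixed pair $(x,s)$, this is $\sum P_i(x)$ taken over those $i$ with $x\in\supp P_i$ and $f_i(x)\star g_i(x)=s$, and $0$ for every $s$ not of this form.

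The key step is to collapse this sum. Since the partition is compatible for $f,g$, any two pieces whose supports contain $x$ agree there, so $f_i(x)=f_{i'}(x)$ and $g_i(x)=g_{i'}(x)$; because $\star$ is a \emph{total} function $S\times S\rightarrow S$, all the values $f_i(x)\star g_i(x)$ coincide with the single value $s_x\defeq f(x)\star g(x)$. Hence for $s=s_x$ the multiplicity is $\sum_{i:\,x\in\supp P_i}P_i(x)=\bigl(\bigoplus_i P_i\bigr)(x)=P(x)$, using that $\ordcollect{n} P_i$ is a generalised partition of $P$, while for $s\neq s_x$ it is $0$. Thus the right-hand side assigns $(x,s)$ exactly the multiplicity $P(x)$ when $s=f(x)\star g(x)$ and $0$ otherwise, which is precisely the hybrid function $(f\star g)^P$, matching the left-hand side. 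Equivalently, one may invoke the earlier decomposition proposition, rewriting the right-hand side directly as $(f\star g)^{P_1\hsum\cdots\hsum P_n}=(f\star g)^P$.

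I expect the main obstacle to be the behaviour at points lying in several supports, $x\in\supp P_i\cap\supp P_j$ with $i\neq j$: a priori the join lands us among hybrid \emph{relations}, so the right-hand side could assign nonzero multiplicity to two distinct pairs $(x,s)$ and $(x,s')$ and thereby fail to be a function. The content of the argument is that compatibility together with totality of $\star$ forces every contribution at $x$ onto the single value $s_x$, so the overlapping multiplicities \emph{add}, via $\bigoplus_i P_i=P$, to $P(x)$ rather than splitting across values; this is exactly what keeps the result a reducible hybrid function. The remaining work — unfolding the definitions and reading off multiplicities — is routine.
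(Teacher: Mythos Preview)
Your proposal is correct. The paper itself offers no proof of this proposition, treating it as immediate from the definitions (just as it does for Proposition~\ref{prop:base-arith}, which is the non-hybrid analogue). Your argument---unfolding the right-hand side via Def.~\ref{def:hybrid-graph} and the definition of $\join$ as $\hsum$, then using compatibility of the partition to collapse all contributions at a given $x$ onto the single value $f(x)\star g(x)$, so that the multiplicities sum to $P(x)$---is exactly the definitional verification the paper is tacitly invoking. Your identification of the overlap issue (that a priori the join could land in hybrid relations) and its resolution via compatibility is the only nontrivial point, and you handle it correctly.
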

Note how we can apply the above proposition to $f^F\star g^G$, by first 
restricting $F$ and $G$ to be over a common support (as 
$x\star\bot = \bot\star y=\bot$), then taking 
a common \emph{strict} refinement of (the restricted) $F$ and $G$.

We would like to lift this strictness condition.  We can almost do this
with pseudo-functions -- and with the help of a marked $\join$, we can.

\begin{definition}
We can \emph{mark} a $\join$ with a binary operation 
$\star:S\times S\rightarrow S$, denoted $\join^{\star}$.
We define evaluation of $\join^{\star}$ by
$$(\tilde{f}^F \join^{\star} \tilde{g}^G)(x) = 
    (F(x)+G(x))\hset{(x, (\tilde{f}\star \tilde{g})(x))^1}$$
\end{definition}
This operation clearly inherits properties of $\star$ like commutativity,
associativity and invertibility.  Unlike $\join$, $\join^{\star}$ will 
always result in a (pseudo) function.

\begin{proposition}\label{prop:distri2}
Let $\pstar=\ordcollect{n} P_i$ be a partition of $P$, 
$\qstar= \ordcollect{m} Q_j$ another partition of $P$, and
$\rstar= \collect_K R_k$ a common refinement of $\pstar$ and $\qstar$.
Let $\tilde{f}^P = f_1^{P_1}\join \ldots \join f_n^{P_n}$
and $\tilde{g}^P = g_1^{Q_1}\join \ldots \join g_m^{Q_m}$ be two \emph{pseudo}
functions with codomain $S$.  Let $\star:(S\times S)\rightarrow S$, be
associative and commutative,
then $\star$ distributes over the partition $\rstar$ in terms of $\join^{\star}$.
By the results of section~\ref{sec:hybrid-dom-decomp},
we can always choose $\rstar$ to be
\[P_1\hsum
\ldots\hsum P_{n-1}\hsum Q_1\hsum \ldots \hsum Q_{m-1}\hsum (U\hdiff (P_1\hsum
\ldots\hsum P_{n-1}\hsum Q_1\hsum \ldots \hsum Q_{m-1}))\]
\end{proposition}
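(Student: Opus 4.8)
The plan is to carry out the whole argument at the level of pseudo-functions, invoking refinement-invariance only to move between domains and using extensional equality only on supports, so that no function is ever evaluated at a point where it is undefined. The target identity to establish is that, writing $R_1,\dots,R_{|K|}$ for the pieces of $\rstar$ and choosing for each $k$ indices $i(k),j(k)$ with $\supp R_k\subseteq\supp P_{i(k)}$ and $\supp R_k\subseteq\supp Q_{j(k)}$ (these exist because $\rstar$ is a common \emph{strict} refinement of $\pstar$ and $\qstar$), the combined pseudo-function $\tilde{f}^P\star\tilde{g}^P$ is obtained by combining the pieces $f_{i(k)}$ and $g_{j(k)}$ over each $R_k$ through $\join^{\star}$.

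First I would reduce the left-hand side to a single pseudo-function: by the pointwise-lifting convention for the induced term algebra, $\tilde{f}^P\star\tilde{g}^P=(\tilde{f}\star\tilde{g})^P$. Next, since $\rstar$ refines the trivial one-piece partition $\{P\}$ (indeed $\bigoplus_k R_k=P$), the refinement-invariance property of pseudo-functions gives $(\tilde{f}\star\tilde{g})^P=(\tilde{f}\star\tilde{g})^{\rstar}$, which the generalised-partition decomposition then expands as $\bigoplus_k(\tilde{f}\star\tilde{g})^{R_k}$. On each piece I would appeal to extensional equality over $\supp R_k$: because $\supp R_k$ lies inside both $\supp P_{i(k)}$ and $\supp Q_{j(k)}$, we have $\tilde{f}=f_{i(k)}$ and $\tilde{g}=g_{j(k)}$ there, so $(\tilde{f}\star\tilde{g})^{R_k}=(f_{i(k)}\star g_{j(k)})^{R_k}$. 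Finally I would read this collection of single-function-part pieces back as the marked-join form, using that $\join^{\star}$ inherits associativity and commutativity from $\star$ so that the pieces may be reassembled in any order.

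The step that needs the most care --- and the main obstacle --- is the multiplicity bookkeeping together with the fact that a refinement piece $R_k$ may ``spill over'' $\supp P$ and may carry a negative multiplicity, so that naive pointwise evaluation on an individual $R_k$ is meaningless. This is exactly why the computation must stay symbolic: the functional part is kept unevaluated, and it is only the identity $\tilde{f}^P=\tilde{f}^{\rstar}$ that is used, so that when one simplifies $\bigoplus_k R_k$ back to $P$ the spilled-over contributions cancel through their signed multiplicities and every surviving point carries total multiplicity $P(x)$; verifying that the $\join^{\star}$-form reproduces precisely this multiplicity (and not, say, a doubled one) is the crux. For the closing claim that such an $\rstar$ may always be chosen of the displayed form, I would specialise the construction of Section~\ref{sec:hybrid-dom-decomp}: taking the choice matrix $C$ with $1$s across the top row, $1$s on the diagonal and $0$ elsewhere has $\det C=1$ and an integer inverse, and reading off its solution exhibits the pieces $P_1,\dots,P_{n-1},Q_1,\dots,Q_{m-1}$ together with the remainder $U\hdiff(P_1\hsum\cdots\hsum P_{n-1}\hsum Q_1\hsum\cdots\hsum Q_{m-1})$ as a common strict refinement of the minimal cardinality $n+m-1$.
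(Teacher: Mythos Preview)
The paper does not give a proof of this proposition; it states it and passes directly to the worked example. So there is no ``paper's proof'' to compare against, only the mechanism illustrated in the example and in \S\ref{sec:app-matrices}.

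Your overall plan---stay at the level of pseudo-functions, use refinement invariance, and only collapse multiplicities at the end---matches the paper's spirit. But the step on which the whole argument rests is wrong. You assert that for each refinement piece $R_k$ there exist $i(k),j(k)$ with $\supp R_k\subseteq\supp P_{i(k)}$ and $\supp R_k\subseteq\supp Q_{j(k)}$, justified by ``strict refinement''. Neither claim holds for the displayed refinement. Take the piece $Q_1$: its support need not lie inside any single $\supp P_i$; it typically straddles several. Moreover, strictness in the paper's sense (the support of the \emph{sub-collection} for $P_i$ equals $\supp P_i$) does not imply that any individual piece of that sub-collection has support inside $\supp P_i$. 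Consequently your key equality $(\tilde f\star\tilde g)^{R_k}=(f_{i(k)}\star g_{j(k)})^{R_k}$, which you obtain by extensional equality ``over $\supp R_k$'', is simply false on such pieces: $\tilde f$ is not a single $f_i$ there.

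What the paper actually does (see the example immediately following the proposition, and equations~(\ref{eq:M_1_2})--(\ref{eq:matrix-sum})) is purely syntactic and never invokes support containment. One uses the refinement \emph{data}---the sub-collection expressing each $P_i$ as a $\hsum$ of pieces $R_k$---to rewrite $f_i^{P_i}$ as $\join_k f_i^{R_k}$ over those pieces, and likewise for $g_j^{Q_j}$. For the displayed refinement each $R_k$ occurs in exactly one sub-collection on the $P$-side and exactly one on the $Q$-side (this is what the triangular choice matrix buys), so the two rewritings line up piece-for-piece and one combines them with $\join^{\star}$. Correctness is then a cancellation of signed multiplicities at the level of pseudo-functions, not pointwise agreement on any individual $R_k$; you identify this as ``the crux'' but your containment argument does not establish it. The fix is to drop the support-containment claim entirely and argue from the sub-collection decomposition and the algebra of $\join^{\star}$ directly.
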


\begin{example}
Let $A_1=[0,a), A_2=[0,1]\setminus A_1, B_1=[0,b), B_2=[0,1]\setminus B_1$,
all seen as hybrid sets.  Let
\begin{equation*}
f(x) = \begin{cases}
    2 & 0\leq x\lt a \\ 
    0 & a\leq x \lt 1
       \end{cases}
\qquad \text{and}\qquad
g(x) = \begin{cases}
    5 & 0\leq x \lt b \\
    7 & b\leq x \lt 1
       \end{cases}
\end{equation*}
We choose the hybrid (symbolic!) partition $A_1, B_1\hdiff A_1, B_2$, which
simultaneously refines both.  Then after a few computations we get
\begin{equation}\label{ex:mult0}
 f*g = \hset{2*5}^{A_1}\join^{*}\hset{2*5}^{B_1\hdiff A_1}
                       \join^{*}\hset{0*7}^{B_2}
\end{equation}
regardless of whether $a<b$ or $a\geq b$; in fact either (or both) could be
outside of $[0,1)$ and the result, interpreted as a hybrid function, are
still correct.  Moving from one choice of partition to another is done by 
undoing the
distribution, performing the change of partition, and using commutativity
and associativity to regroup like terms.  Note that we should have written
$2*5$ as $(x\mapsto 2)\tilde{*}(x\mapsto 5)$, but we chose the above for
greater clarity.
\end{example}
It should be very clear that expressions such as equation~\ref{ex:mult0}
are a formal representation of a piecewise function, and need to be
\emph{interpreted} properly in each context.

\subsection{Identities for invertible operators}

When the binary operation we use is invertible, we can perform the operations at
any time, as operands may later be removed from a cumulative result by applying
their inverses. This may lead to considerable efficiency improvements; even
though values and their inverses are cancelled in the calculation (leading to no
net effect) this may be more efficient than retaining an ``un-evaluable''
expression as a pseudo-function.

\begin{proposition}\label{prop:invert}
Let $f^P$ be a hybrid function over $S$ where $(S,\star)$ has the 
structure of an Abelian group (where we will use $e$ for the unit and $-$
for the inverse), then for all $x\in \supp P$,
$$(f^P \star f^{-P})(x) = (f^P \star (-f)^P)(x) = P(x)\hset{ (x,e)^1 }$$
where $-f$ denotes $x\mapsto{-f(x)}$.
\end{proposition}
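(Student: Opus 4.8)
The plan is to reduce both asserted equalities to a pointwise computation at each $x\in\supp P$, exploiting the fact that a hybrid function is completely determined by the multiplicity it assigns to each point of $U\times S$. Since the right-hand side $P(x)\hset{(x,e)^1}$ is exactly the value at $x$ of the hybrid function $e^P$ (the constant function $e$ over the hybrid set $P$, in the sense of Def.~\ref{def:hybrid-graph}), it suffices to show that each of $f^P\star f^{-P}$ and $f^P\star(-f)^P$ evaluates to $e^P$.

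I would dispatch the second equality first, since its two operands $f^P$ and $(-f)^P$ share the common hybrid set $P$. This is precisely the situation covered by the pointwise evaluation of the term algebra over a common partition introduced just before Prop.~\ref{prop:distri2}: we have $(f^P\star(-f)^P)(x)=P(x)\hset{(x,(f\star(-f))(x))^1}$. Invoking the Abelian-group inverse axiom, $(f\star(-f))(x)=f(x)\star(-f(x))=e$, which collapses the value to the unit while leaving the multiplicity $P(x)$ untouched, giving $P(x)\hset{(x,e)^1}$ as required.

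For the first equality the two operands live over \emph{different} hybrid sets, $P$ and $-P=\hdiff P$, so the common-partition evaluation does not apply directly. The key step is to observe that, under an invertible $\star$, negating the integer multiplicity of a hybrid function is interchangeable with inverting its value: a point carrying value $s$ with multiplicity $-m$ contributes to a $\star$-product exactly what a point carrying the group inverse $-s$ with multiplicity $m$ does, by the group-power identity. Consequently $f^{-P}$ may be replaced by $(-f)^P$ inside the product, and the first equality collapses onto the second, again yielding $e^P$. Here the commutativity of $\star$ guarantees that this replacement is clean, and it is the full Abelian-group structure, rather than a mere monoid, that is being used.

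I expect the single non-routine point --- and hence the main obstacle --- to be making the interchange of multiplicity-negation and value-inversion precise: that is, justifying that $f^{-P}$ and $(-f)^P$ act identically as operands of $\star$, even though they are distinct as raw hybrid sets in $\Z^{U\times S}$, being supported on $(x,f(x))$ and $(x,-f(x))$ respectively. This is exactly where the group hypothesis is indispensable; once it is established, everything else is pointwise bookkeeping with the inverse axiom together with the module identity $f^{-P}=-f^P$.
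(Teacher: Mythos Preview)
Your proposal is correct and aligns with the paper's approach. The paper itself offers no proof beyond the sentence ``Both equalities follow readily from the definitions,'' so your argument is in fact more detailed than the original: you handle the second equality by the straightforward pointwise evaluation over the common hybrid set $P$ together with the inverse axiom, and you correctly isolate the one genuinely non-trivial step in the first equality, namely that $f^{-P}$ and $(-f)^P$, though distinct as hybrid sets, are interchangeable as operands of $\star$ by the group-power identity $s^{-m}=(-s)^{m}$.
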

Both equalities follow readily from the definitions.

\subsection{Identities for linear operators}

\begin{definition}
For a linear operator $L$, and $f^P$ a hybrid function, 
$$L(f^P) \defeq L( x\mapsto P(x)\cdot f(x) )$$
\end{definition}
Note $L(f^P)$ may not be defined even when $L(f)$ is.  If the multiplicity
function $P(x)$ is uniformly bounded, then it will exist.

\begin{proposition}
Let $f^P$ be a hybrid function, $\ordcollect{n} P_i$ a partition of $P$ such
that each $P_i(x)$ is uniformly bounded, then
$$L(f^P) = \sum_{i=1}^{n} L(f^{P_i})$$
\end{proposition}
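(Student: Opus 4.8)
The plan is to unwind the definition of $L(f^P)$ and then exploit that a generalised partition is, by definition, a pointwise-additive decomposition of the multiplicity function $P$. First I would recall that $L(f^P) \defeq L(x\mapsto P(x)\cdot f(x))$, and that $\ordcollect{n} P_i$ being a generalised partition of $P$ means precisely $P_1\hsum\ldots\hsum P_n = P$, i.e.\ that $P(x)=\sum_{i=1}^n P_i(x)$ for every $x\in U$.

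Next I would push this decomposition through the scalar action: since the scalar multiplication distributes over the addition of the integer multiplicities,
$$P(x)\cdot f(x) = \Bigl(\sum_{i=1}^n P_i(x)\Bigr)\cdot f(x) = \sum_{i=1}^n P_i(x)\cdot f(x),$$
so the function $x\mapsto P(x)\cdot f(x)$ equals the pointwise sum of the $n$ functions $g_i \defeq x\mapsto P_i(x)\cdot f(x)$. Applying $L$ and using that it is linear --- hence additive over the finite sum $g_1+\ldots+g_n$ --- yields $L(f^P)=\sum_{i=1}^n L(g_i)=\sum_{i=1}^n L(f^{P_i})$, which is the claim.

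The step that actually requires the hypothesis, and which I expect to be the only real obstacle, is justifying that every quantity in sight is defined, so that additivity of $L$ may legitimately be invoked. As the remark preceding the statement warns, $L(f^P)$ need not exist in general. The uniform boundedness of each $P_i(x)$ guarantees that each $g_i$ lies in the domain of $L$, so each $L(f^{P_i})$ is defined; and since $P=\sum_i P_i$ is then a finite sum of uniformly bounded functions, $P$ is itself uniformly bounded, whence $L(f^P)$ exists as well. With every summand in the domain of $L$, finite additivity (a consequence of linearity, needing no limiting argument since $n$ is finite) commutes $L$ with the sum, completing the argument.
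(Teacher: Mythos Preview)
The paper does not actually supply a proof of this proposition; it states the result and immediately moves on to the corollary about Karr-style summation identities. Your argument is the natural one and is correct: unfold the definition $L(f^P)=L(x\mapsto P(x)\cdot f(x))$, use the pointwise identity $P(x)=\sum_{i=1}^n P_i(x)$ that is exactly the content of $\ordcollect{n} P_i$ being a generalised partition of $P$, distribute the scalar action, and then apply finite additivity of the linear operator $L$. Your handling of the well-definedness hypothesis is also appropriate: uniform boundedness of each $P_i$ ensures each $L(f^{P_i})$ exists (by the remark immediately preceding the proposition), and since $P$ is a finite sum of uniformly bounded multiplicity functions it is itself uniformly bounded, so $L(f^P)$ exists and linearity may be invoked termwise.
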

The above is the fundamental reason why, under Karr's definition, the 
summation identities of the introduction hold.

\begin{corollary}
For all total functions $f:\Z\rightarrow G$ with $G$ an Abelian group,
and all $\ell,m,n \in \Z$,
$$
\sum_{\ell \le i < n} f(i) = \sum_{\ell \le i < m} f(i) + \sum_{m \le i < n} f(i).
$$
\end{corollary}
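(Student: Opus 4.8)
The plan is to recognise the Karr summation operator as a $\Z$-linear operator $L$ on finitely supported functions $\Z\rightarrow G$, namely $L(h)=\sum_{i\in\Z}h(i)$, and then to read off the identity from the preceding proposition on linear operators over a partition. Since $G$ is an Abelian group it is a $\Z$-module, so for any hybrid set $P$ over $\Z$ the expression $L(f^P)=L\bigl(x\mapsto P(x)\cdot f(x)\bigr)=\sum_{i} P(i)f(i)$ is a well-defined element of $G$ whenever $P$ has finite support. The one ingredient we must supply is a hybrid set encoding the oriented range $\ell\le i\lt n$, including Karr's sign convention, so that each summation in the statement becomes $L(f^P)$ for a suitable $P$.

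First I would introduce, for integers $a,b$, the oriented-interval hybrid set $P_{a,b}$ over $\Z$ defined by $P_{a,b}(i)=[\,i\ge a\,]-[\,i\ge b\,]$, where $[\cdot]$ is the Iverson bracket. Its multiplicity is $+1$ on $a\le i\lt b$, is $-1$ on $b\le i\lt a$, and is $0$ elsewhere; in particular $P_{a,b}$ takes values in $\set{-1,0,1}$ and has finite support. Unwinding $L(f^{P_{a,b}})=\sum_{i} P_{a,b}(i)f(i)$ and splitting on whether $a\le b$ or $b\lt a$ shows, in both cases, that $L(f^{P_{a,b}})=\sum_{a\le i\lt b}f(i)$ in Karr's signed sense: the $-1$ multiplicities on the reversed range reproduce exactly the defining relation $\sum_{m\le i\lt n}=-\sum_{n\le i\lt m}$.

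The crux is then the hybrid-set identity $P_{\ell,m}\hsum P_{m,n}=P_{\ell,n}$, which must hold for \emph{every} ordering of $\ell,m,n$. This is where the hybrid-set machinery pays off: writing $\chi_a(i)=[\,i\ge a\,]$, we have $P_{a,b}=\chi_a\hdiff\chi_b$ pointwise, so that $P_{\ell,m}\hsum P_{m,n}=(\chi_\ell\hdiff\chi_m)\hsum(\chi_m\hdiff\chi_n)=\chi_\ell\hdiff\chi_n=P_{\ell,n}$, the $\chi_m$ terms telescoping away. Thus the pair $P_{\ell,m},P_{m,n}$ forms a generalised partition of $P_{\ell,n}$, and each multiplicity function is uniformly bounded (by $1$), so the hypotheses of the preceding linear-operator proposition are met.

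Finally I would apply that proposition with $L=\sum$ to obtain $L(f^{P_{\ell,n}})=L(f^{P_{\ell,m}})+L(f^{P_{m,n}})$, and translate each term back through the second step to recover the claimed equation. The only real subtlety — and the point I would check most carefully — is the order-independence established in the second step, since Karr's convention has genuine sign content; the single telescoping of the third step then discharges all six orderings of $\ell,m,n$ at once, so no case analysis on their relative order is ever needed. I do not foresee a genuine obstacle: the work is entirely in setting up $P_{a,b}$ so that $L(f^{P_{a,b}})$ agrees with Karr's convention, after which the additive identity is immediate.
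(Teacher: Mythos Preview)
Your argument is correct and is exactly the route the paper intends: the corollary is stated immediately after the linear-operator proposition with no separate proof, the surrounding text merely saying that the proposition is ``the fundamental reason'' the Karr identity holds. You have supplied the details the paper leaves implicit---the concrete oriented-interval hybrid set $P_{a,b}=\chi_a\hdiff\chi_b$, the verification that $L(f^{P_{a,b}})$ reproduces Karr's signed sum, and the telescoping $P_{\ell,m}\hsum P_{m,n}=P_{\ell,n}$---so there is nothing to add or correct.
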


\section{Examples}
\label{sec:examples}

We present two examples of the application of hybrid functions to symbolic
computation problems. The first example is concerned with the arithmetic of
symbolic matrices, the second presents the idea of merging symbolic spline
functions.

\subsection{Matrix Addition}
\label{sec:app-matrices}

Earlier work~\cite{SeSoWa09issac,SeSoWa09calculemus} introduced 
The idea of support functions has been introduced previously
to represent symbolic matrices --- matrices given in terms
of symbolic regions with underspecified elements and symbolic dimensions --- and
defined arithmetic operations between them. The paper~\cite{SeSoWa09issac} 
presented a support function based on half-plane constraints that enables 
full arithmetic,
but that suffered from a combinatorial explosion in the number of terms needed
to express sum or product matrices. The paper~\cite{SeSoWa09calculemus} moved to a
support function based on interval addition that automatically dealt with
cancellation for negative intervals. While this avoided the combinatorial
explosion, the approach was restricted to certain types of regions and could not
be easily generalised to matrix multiplication. Hybrid functions and generalised
partitions solve both of these problems simultaneously. We demonstrate this with
the example of matrix addition of two $2\times2$ symbolic block matrices. Let
\begin{displaymath}
  M_1 = 
  \begin{pmatrix}
    A_1 & B_1 \\
    C_1 & D_1 \\
  \end{pmatrix},\qquad
  M_2 = 
  \begin{pmatrix}
    A_2 & B_2 \\
    C_2 & D_2 \\
  \end{pmatrix}
\end{displaymath}
where $M_1$ and $M_2$ are $n\times m$ matrices, $A_1$ and $A_2$ are of
dimensions $h_1 \times k_1$ and $h_2 \times k_2$ respectively,
$n,m,h_1,h_2,k_1,k_2\in \N_0$. 

Let $U = \zfset{(i,j)}{1 \le i \le n \land 1 \le j \le m}$ be
the set of all cell points in the matrices. We define the region occupied
by a matrix block similarly, and refer both to a matrix block and to the region
it occupies by the same name, relying on context to distinguish them.
We can thus write
\begin{equation}
  \label{eq:M_1_2}
  M_1 = A_1^{A_1} \join B_1^{B_1} \join C_1^{C_1} \join D_1^{D_1}, \qquad 
  M_2 = A_2^{A_2} \join B_2^{B_2} \join C_2^{C_2} \join D_2^{D_2}
\end{equation}
To calculate $M_1 + M_2$ we:
\begin{inparaenum}[(1)]
\item Choose a suitable generalised partition $\pstar$ of $U$.
\item Rewrite each block of each matrix into terms restricted to the chosen
  partition.
\item Substitute into the expressions for the matrices.
\item Add the two matrices region-wise.
\end{inparaenum}

As established in the section~\ref{sec:hybrid-dom-decomp}, the maximal number
of partitions required in our case is $4+4-1=7$. We therefore
choose $6$
independent regions to be $A_1, B_1, C_1, A_2, B_2,C_2$ and obtain the seventh,
$P_1$, by subtracting all other regions from the universe $U$,
\begin{equation}
P_1=U \hdiff (A_1
\hsum B_1 \hsum C_1 \hsum A_2 \hsum B_2 \hsum C_2).\label{eq:p-one}
\end{equation}
We can then express regions $D_1$ and $D_2$ in terms of $P_1$: $D_1 = U \hdiff
(A_1 \hsum B_1 \hsum C_1) = P_1 \hsum A_2 \hsum B_2 \hsum C_2$, and $D_2 = P_1
\hsum A_1 \hsum B_1 \hsum C_1$.

Rewriting $M_1,M_2$ from Eq.~(\ref{eq:M_1_2}), we get:
\begin{align*}
  M_1 &= A_1^{A_1} \join B_1^{B_1} \join C_1^{C_1} \join D_1^{P_1 \hsum A_2 \hsum B_2 \hsum C_2} \notag \\
  &= D_1^{P_1} \join A_1^{A_1} \join B_1^{B_1} \join C_1^{C_1} \join D_1^{A_2} \join D_1^{B_2} \join D_1^{C_2}  
\end{align*}
\begin{equation*}
  M_2 = D_2^{P_1} \join D_2^{A_1} \join D_2^{B_1} \join D_2^{C_1} \join A_2^{A_2} \join B_2^{B_2} \join C_2^{C_2}
\end{equation*}
This lets us express the sum of the two matrices as the following pseudo-function:
\def\joinplus{\join^{\scriptscriptstyle +}}
\begin{align}\label{eq:matrix-sum}
  M_1 + M_2 &{=}
  \left( D_1 + D_2\right)^{P_1}  \!\!\!\!& \joinplus 
  \left( A_1 + D_2\right)^{A_1}  \joinplus
  \left( B_1 + D_2\right)^{B_1}  \joinplus
  \left( C_1 + D_2\right)^{C_1}  \notag\\
  &&\joinplus \left( D_1 + A_2\right)^{A_2}  \joinplus  
  \left( D_1 + B_2\right)^{B_2}  \joinplus
  \left( D_1 + C_2\right)^{C_2} 
\end{align}
Observe that the seven terms of the hybrid function fully capture the result of
the matrix addition independent of the order of the symbolic dimensions
$h_1,h_2,k_1,k_2$ of the original blocks. We demonstrate this by evaluating the
function for a couple of concrete points in the sum matrix.

First let $h_1<h_2$ and choose a concrete value of a cell $(i,j)$ where
$h_1<i\leq h_2$ and $1\leq j<k_1,k_2$. The point should therefore be in
a region composed of elements from $B_1$ and $A_2$. Instantiating the 
multiplicities in equation~(\ref{eq:matrix-sum}) verifies this.  Observe that
indeed the only regions with multiplicity $1$ are $B_1 = \zfset{(i,j)}{h_1 \le
i \le n \land 1 \le j \le k_1}$ and $A_2 = \zfset{(i,j)}{1 \le i \le h_2 \land
1 \le j \le k_2}$ whereas the multiplicities for $A_1,B_2,C_1,C_2$ are all $0$.
Furthermore, we can
compute the multiplicity for $P_1$ using equation~(\ref{eq:p-one}). Since the
multiplicity of the universe $U$ is always $1$ --- every element is in this
partition --- we get $1-(0+1+0+1+0+0)=-1$.  This then yields the computation
below, which confirms that our element is indeed in the anticipated region
(where we write region-wise sets $\hset{(D_1+D_2)^{-1}}$ as
$(D_1+D_2)^{-1}$ to alleviate notation)
\[
  \left( D_1 + D_2\right)^{-1} \joinplus \left( B_1 + D_2\right)^1  \joinplus \left( D_1 + A_2\right)^{1} = B_1^1\joinplus A_2^1 = B_1 + A_2 
\]

Now assume that the order of the symbolic dimensions for blocks $A_1,A_2$ is
reversed and we have $h_2<h_1$. If we now compute the value of a cell with
$(i,j)$ with $h_2<i\leq h_1$ and $1\leq j<k_1,k_2$, we get $0$ multiplicity for
regions $B_1,A_2$, but instead multiplicity $1$ for $A_1,B_2$. Again the
multiplicity for $P_1$ is $-1$ and equation~(\ref{eq:matrix-sum}) will yield
that our cell is in the $A_1+B_2$ region.

As a final example we compute cell $(i,j)$ with $h_1,h_2<i\leq n$ and
$k_1,k_2<j\leq m$, which is a point from $D_1$ and $D_2$. This time
equation~(\ref{eq:matrix-sum}) simplifies even more quickly, as the
multiplicities for $A_1,A_2,B_1,B_2,C_1,C_2$ are all $0$ and we only have to
consider the multiplicity for $P_1$ which is $1$ with the same considerations as
above, yielding $D_1+D_2$ as the only term that does not vanish.

\subsection{Symbolic Spline Interpolation}
\label{sec:app-splines}

\def\smerge{\Join}

Numerical computation and manipulation of splines is well understood. Here we
show how hybrid domain decomposition can be used to support the previously
unexplored symbolic manipulation of splines.

Let $a= x_0 < x_1 < \cdots < x_{n-1} < x_n=b$ be a partition of the interval
$\left[a,b\right]\subset\R$. We call the $x_i$ knots and assume that for each
knot we have a corresponding value $y_i$. Then we can define a spline
function $S$ over $\left[a,b\right]$ piecewise as
\begin{equation}
S(x) = \left\{\begin{matrix} 
             S_0(x) & x \in [x_0, x_1] \\ 
\vdots & \vdots \\ S_{n-1}(x) & x \in [x_{n-1},
    x_n] \end{matrix}\right.\label{eq:spline}
\end{equation}
While spline interpolation is traditionally defined for numerical values of the
$x_i,y_i$ pairs, we will now define a symbolic spline function. Let
$a=c_0<c_1<\cdots< c_{n-1}<c_n=b$ be symbolic or ``abstract'' knots with
associated symbolic values $d_0,d_1,\ldots,d_{n}$, where a $d_i$ is generally
given as a function in $c_i$.  We define spline segments $S_{c_i,c_{i+1}}(x)$
for $i=0, \ldots, n-1$. That is, the segments are parameterised with respect to
two knots and their values.  Let $P=P_1\hsum \ldots \hsum P_{n}$ with
$P_i=[c_{i-1},c_{i}]$ be a generalised partition. We then define a symbolic
spline function $S(x)= S^{P_1}_{c_0,c_{1}}(x) \join \cdots \join
S_{c_{n-1},c_{n}}(x)^{P_{n}}$.  For clarity we often omit the $(x)$ part of the
term.

Define the merge of two spline segments $S_{a,b}^P\smerge
S_{a',b'}^Q$ to be $S_{max(a,a'),min(b,b')}^{P\hmult Q}$. Clearly this merge will
be empty if the two segments do not overlap, otherwise it will be the smallest
possible spline for the overlapping interval of the segments. 

Now let $P_1\hsum\cdots\hsum P_n$ and $Q_1\hsum\cdots\hsum Q_m$ be two
generalised partitions of the interval $[a,b]$ and let $S= S^{P_1}_{c_0,c_{1}}
\join \cdots \join S_{c_{n-1},c_{n}}^{P_{n}}$ and $T= T^{Q_1}_{d_0,d_{1}} \join
\cdots \join T_{d_{m-1},d_{m}}^{Q_{m}}$ be two symbolic splines. Observe that
the $d_i$ here are knots and not knot values. We can then define the merge
$S\smerge T$ as a binary operation on two hybrid functions as given above in
proposition~\ref{prop:distri2}.

We observe the merge operation using a simple example. Let $P=P_1\hsum P_2$ and
$Q=Q_1\hsum Q_2$ be the generalised partitions $a<c<b$ and $a<d<b$ of our
universe $[a,b]$, respectively. Let $S= S_{a,c}^{P_1} \join S_{c,b}^{P_2}$ and
$T= T_{a,d}^{Q_1} \join T_{d,b}^{Q_{2}}$ be two symbolic splines. We choose a
common refinement as $P_1,Q_1,R=U\hdiff(P_1\hsum Q_1)$. We can then write $S=
S_{a,c}^{P_1} \join S_{c,b}^{R\hsum Q_1}=S_{a,c}^{P_1}\join S_{c,b}^{R} \join
S_{c,b}^{Q_1}$ and similarly $T= T_{a,d}^{Q_1} \join T_{d,b}^{P_{1}} \join
T_{d,b}^{R}$. When we merge both symbolic splines we get
\begin{equation}
  \label{eq:spline-ex}
  S\smerge T=
  (S_{a,c}\smerge T_{d,b})^{P_1}\join^{\smerge} (S_{c,b}\smerge
  T_{a,d})^{Q_1}\join^{\smerge} (S_{c,b}\smerge T_{d,b})^{R}
\end{equation}
If we now fix the order of our symbolic knots to be $a<c<d<b$ we can evaluate
the three components of our spline. First let $a\leq x\leq c$, which means
$P_1=Q_1=1$ and $R=-1$ and~(\ref{eq:spline-ex}) evaluates to $S_{a,c}\smerge
T_{a,d}$ which yields a spline between knots $a,c$. Similarly the other two
segments evaluate $S_{c,b}\smerge T_{a,d}$ and $S_{c,b}\smerge T_{d,b}$, which
yields splines for the intervals $[c,d]$ and $[d,b]$, respectively.

\section{Conclusion}
\label{sec:conc}

We have presented a framework of generalised partitions and domain decomposition
based on hybrid sets. This framework has a number of pleasing properties and
unifies a number of \textit{ad hoc} notions in common use. More importantly for
our purposes, this representation allows easy manipulation of and reasoning
about partitions whose pieces are defined symbolically.
These specialise correctly for all choices of
parameters by negative and positive multiplicities cancelling as needed.
The representation (see for example equation~(\ref{ex:mult0})) needs to be carefully interpreted, as
out of context simplification can result in meaningless results.  But if the
rules we lay out are followed, our compact representation effectively
allows one to compute with very general piecewise-defined functions.

Although a number of previous authors have studied hybrid sets, our 
study of functions over hybrid sets, generalised partitions, the superposition
$\join$ and marked superposition $\join^{\star}$ operators appear to all be
new.  Our applications to computation and reasoning are certainly new.

There remain several intriguing directions for future work. These include normal
forms for piecewise functions defined on hybrid partitions, simplification of
intermediate expressions involving linear operators or inverse elements and
creating partition schemes from algebraic specialisation properties,
\textit{e.g.} in Cylindrical Algebraic Decomposition. We have implemented a
prototype Maple package for hybrid sets and hybrid functions, but a more general
implementation would be of interest.

\bibliographystyle{plain}
\bibliography{calculemus10}

\end{document}